%% Source template: elsarticle-template-num.tex
%% Copyright 2007-2025 Elsevier Ltd

\documentclass[preprint,12pt]{elsarticle}

%% The amssymb package provides various useful mathematical symbols
\usepackage{amssymb}
%% The amsmath package provides various useful equation environments.
\usepackage{amsmath}
%% The amsthm package provides extended theorem environments
\usepackage{amsthm}
\usepackage{lineno}
\usepackage{hyperref}

%% Enviroments definitions
\newdefinition{definition}{Definition}
\newdefinition{example}{Example}
\newdefinition{observation}{Observation}
\newtheorem{theorem}{Theorem}

\newtheorem{lemma}{Lemma}
\newtheorem{corollary}{Corollary}

%% Our packages

\usepackage{graphicx}
\usepackage{tikz}
\usetikzlibrary{fit,patterns,decorations.pathmorphing,decorations.pathreplacing,calc}
\usepackage{xcolor}
\usepackage{amsfonts}
% To use "mathpar", which allows automatic linebreaking in math mode
\usepackage{mathpartir}
\usepackage{multirow}
\usepackage{adjustbox}
\usepackage{color}

%% Our commands

\urlstyle{rm}

\newcommand{\sreset}{\mathcal{S}_r}

\newcommand{\dd}{\mathinner{.\,.}}

\renewcommand{\a}{\texttt{a}}
\renewcommand{\b}{\texttt{b}}
\renewcommand{\c}{\texttt{c}}
\newcommand{\str}[1]{\texttt{#1}}
\newcommand{\dol}{\text{\tt \$}}
\newcommand{\htag}{\text{\tt \#}}
\newcommand{\db}{\mathtt{dB}}
\newcommand{\sre}{\mathtt{sre}}
\newcommand{\infw}{\textbf{w}}
\newcommand{\bwt}{\mathtt{BWT}}

\newcommand{\rot}{\mathcal{R}}
\newcommand{\fact}{\mathcal{F}}

\newcommand{\vir}[1]{``#1''}

% Repetitiveness Measures

\newcommand{\grl}{g_{rl}}
\newcommand{\zno}{z_{no}}
\newcommand{\ze}{z_{e}}
\newcommand{\zend}{z_{end}}

\journal{Theoretical Computer Science}

\begin{document}

\begin{frontmatter}

\title{Smallest Suffixient Sets: \texorpdfstring{\\}{} Effectiveness, Resilience, and Calculation}

\author[1]{Hiroto Fujimaru\corref{cor1}} %% Author name
\ead{fujimaru.hiroto.134@s.kyushu-u.ac.jp}
\author[2,3]{Gonzalo Navarro\corref{cor1}} %% Author name
\ead{gnavarro@dcc.uchile.cl}
\author[4]{Giuseppe Romana\corref{cor1}} %% Author name
\ead{giuseppe.romana01@unipa.it}
\author[3,5]{Cristian Urbina\corref{cor1}} %% Author name
\ead{c.urbina-gallegos@uw.edu.pl}

%% Author affiliation
\affiliation[1]{organization={Department of Information Science and Technology, Kyushu University},%Department and Organization
            city={Fukuoka},
            country={Japan}}
\affiliation[2]{organization={Department of Computer Science, University of Chile},%Department and Organization
            %addressline={}, 
            city={Santiago},
            %postcode={}, 
            %state={},
            country={Chile}}
\affiliation[3]{organization={Center for Biotechnology and Bioengineering (CeBiB)},%Department and Organization
            city={Santiago},
            country={Chile}}
\affiliation[4]{organization={Department of Mathematics and Computer Science, University of Palermo},%Department and Organization
            city={Palermo},
            country={Italy}}
\affiliation[5]{organization={Faculty of Mathematics, Informatics and Mechanics, University of Warsaw},%Department and Organization
            city={Warsaw},
            country={Poland}}

\cortext[cor1]{Corresponding author}

\begin{abstract}
A suffixient set is a novel combinatorial object that captures the essential information of repetitive strings in a way that, provided with a random access mechanism, supports various forms of pattern matching. In this paper, we study the size $\chi$ of the smallest suffixient set as a repetitiveness measure.

First, we study its sensitivity to various string operations. We show that $\chi$ cannot increase by more than 2 after appending or prepending a character to the string. As a consequence, we are able to give simple linear-time online algorithms to compute smallest suffixient sets. We also show that, although reversing the string can increase $\chi$ by an arbitrary $O(n)$ value, it always holds $\chi(T)/\chi(T^R)\le 2$. We also prove lower and upper bounds for the additive or multiplicative increase of $\chi$ after applying arbitrary edit operations, or rotating the text. In particular, we show that the additive increase can be as large as $\Omega(\sqrt{n})$ for all those operations.

Secondly, we place $\chi$ among known repetitiveness measures. In particular, we show $\chi \le 2r$ (where $r$ is the number of runs in the Burrows-Wheeler Transform of the string), that there are string families where $\chi=o(v)$ (where $v$ is the size of the smallext lexicographic parse of the string), and that $\chi$ is uncomparable to almost all reachable measures based on copy-paste mechanisms. In passing, we give precise bounds for $\chi$ for some relevant string families, for example $\chi \le \sigma+2$ on episturmian words over alphabets of size $\sigma$ (e.g., $\chi \le 4$ on Fibonacci strings, for which we precisely characterize the only two smallest suffixient sets).
\end{abstract}

\begin{keyword}
Repetitive sequences \sep Burrows-Wheeler transform \sep Suffixient sets \sep Text compressibility 
\end{keyword}
\end{frontmatter}

%% Comment to disable line numbers
%\linenumbers

\section{Introduction}

The study of repetitive string collections has recently attracted considerable interest from the stringology community, triggered by practical challenges such as representing huge collections of similar strings in a way that they can be searched and mined directly in highly compressed form~\cite{Navacmcs20.3,Navacmcs20.2}. An example is the {\em European '1+ Million Genomes' Initiative}\footnote{{\tt https://digital-strategy.ec.europa.eu/en/policies/1-million-genomes}}, which aims at sequencing over a million human genomes: while this data requires around 750TB of storage in raw form (using 2 bits per base), the high similarity between human genomes would allow storing it in querieable form using two orders of magnitude less space.

An important aspect of this research is to understand how to measure repetitiveness, especially when those measures reflect the size of compressed representations that offer different access and search functionalities on the collection. Various repetitiveness measures have been proposed, from abstract lower bounds to those related to specific text compressors and indices; a relatively up-to-date survey is maintained~\cite{NavSurvey}. Understanding how these measures relate to each other sheds light on what search functionality is obtained at what space cost.

A relevant measure recently proposed is the size $\chi$ of the smallest {\em suffixient set} of the text collection~\cite{DGLMP23}, whose precise definition will be given later. Within $O(\chi)$ size, plus a random-access mechanism on the string, it is possible to support some text search 
functionalities, such as finding one occurrence of a pattern, or finding its maximal exact matches (MEMs), which is of central use on various bioinformatic applications~\cite{suffixientarrays}.

While there has been some work already on how to build minimal suffixient sets and how to index and search a string within their size, less is known about that size, $\chi$, as a measure of repetitiveness. It is only known~\cite{DGLMP23} that $\gamma \le \chi \le 2\overline{r}$ on every string family, where $\gamma$ is the size of the smallest {\em string attractor} of the collection (a measure that lower bounds most repetitiveness measures)~\cite{KP18} and $\overline{r}$ is the number of equal-letter runs of the Burrows-Wheeler Transform (BWT)~\cite{BW94} of the reversed string.
Very recently, it has been shown that $s = O(\chi)$, where $s$ is the size of the smallest Substring Equation System (SES) that uniquely describes the string \cite{SB26}.

In this paper we better characterize $\chi$ as a repetitiveness measure. First, we study how it behaves when the string (of length $n$) undergoes updates, showing in particular that it grows by $O(1)$ when appending or prepending symbols, but that it can grow additively by $\Omega(\sqrt{n})$ upon arbitrary edit operations or rotations, and by $\Omega(n)$ when reversing the string. However, for this last operation, we show that $\chi$ cannot grow by a factor larger than $2$. Second, we show that $\chi \le 2r$ on every string family, where $r$ is the number of equal-letter runs of the BWT of the string. We also show that there are string families where $\chi=o(v)$, where $v$ is the size of the smallest lexicographic parse~\cite{NOP21} (an alternative to the size of the Lempel-Ziv parse~\cite{LZ76}, which behaves similarly). In particular, this holds on the Fibonacci strings, where we fully characterize the only 2 smallest suffixient sets of size 4, and further prove that $\chi \le \sigma+2$ on all substrings of episturmian words over an alphabet of size $\sigma$. Since $v=O(r)$ on all string families, this settles $\chi$ as a strictly smaller measure than $r$, which is a more natural characterization than in terms of the reverse string. We also show that $\chi$ is incomparable with most ``copy-paste'' based measures~\cite{Navacmcs20.3}, as there are families where it is strictly smaller and others where it is strictly larger than any of those measures. 

This result relates to the important question of whether a measure $\mu$ is {\em reachable} (i.e., one can represent the string within $O(\mu)$ space), {\em accessible} (i.e., one can access any string position from an $O(\mu)$-size representation, in $o(n)$ time), or {\em searchable} (i.e., one can search for patterns in $o(n)$ time within space $O(\mu)$). Measure $r$ is, curiously, the only one to date being reachable and searchable, but not known to be accessible.
Now $\chi$ emerges as a measure smaller than $r$, which can search if provided with a mechanism to efficiently access substrings ($r$ does not need access to support searches). 
It has been recently shown that $\chi$ is reachable, since $s=O(\chi)$ is reachable \cite{SB26}, but its accessibility status remains unknown.

Our final contribution are new, extremely simple, {\em online} algorithms to compute smallest suffixient sets (and thus $\chi$), scanning the text left to right or right to left. While there already exist efficient algorithms to do this \cite{cop:spire2024}, our new algorithm can, at any point, exhibit a smallest suffixient set for the string it has just consumed. Just as online suffix tree constructions \cite{Wei73,Ukk95}, our algorithm uses $O(n)$ space and worst-case time in the transdichotomous RAM model over polynomial-size integer alphabets. The best previous algorithm obtains the same result \cite{cop:spire2024}, but it is not online and starts from the suffix array and other components of the suffix tree, whereas ours starts from the text and builds the suffix tree at the same time. All linear-time suffix tree construction algorithms run under the same model of computation.

A preliminary version of this paper appeared in the proceedings of the conference SPIRE 2025 \cite{NRU_SPIRE_2025}. In this extended version, we improved several of our previous results, establish new ones, and present the new online construction algorithms. We also show extended proofs, new figures, and more.

\section{Preliminaries}\label{sec:preliminaries}

An \emph{ordered alphabet} $\Sigma = \{a_1,\dots,a_\sigma\}$ is a finite set of symbols equipped with a total order $<$ such that $a_1 < a_2 < \dots < a_\sigma$.  When $\sigma = 2$, we assume $\Sigma = \{\a,\b\}$ with $\a < \b$.   
The special symbol $\dol$, if it appears, is always assumed to be the smallest of the alphabet. 

A \emph{string} $w[1\dd n]$ (or simply $w$ if it is clear from the context) of \emph{length} $|w| = n$ over the alphabet $\Sigma$ is a sequence $w[1]w[2]\cdots w[n]$ of symbols where $w[i] \in\Sigma$ for all $i \in [1,n]$.  The \emph{empty string} of length $0$ is denoted by $\epsilon$.  We denote by $\Sigma^*$ the set of all strings over $\Sigma$. Additionally, we let $\Sigma^+ = \Sigma^* \setminus \{\epsilon\}$ and $\Sigma^k = \{w\in\Sigma^* \mid |w| = k\}$. 
We denote by $w[i\dd j]$ the substring $w[i]w[i+1]\cdots w[j]$. If $x=x[1\dd n]$ and $y=y[1\dd m]$ are strings, we define the \emph{concatenation operation} applied on $x$ and $y$, as the string obtained by juxtaposing these two strings, that is, $x \cdot y = x[1] x[2]\cdots x[n]y[1] \cdots y[m] = xy$.
A string $x$ is a \emph{substring} of $w$ if $w = yxz$ for some $y,z \in \Sigma^*$. A string $x$ is a \emph{prefix} of $w$ if $w = x y$ for some $y \in \Sigma^*$. Analogously, $x$ is a \emph{suffix} of $w$ if $w = y x$ for some $y \in \Sigma^*$. We say that substrings, prefixes, and suffixes are \emph{non-trivial} if they are different from $w$ and $\epsilon$. The set of substrings of $w$ is denoted by $\fact_w$. We also let $\fact_w(k) = \fact_w\cap\Sigma^k$.
The \emph{reverse} of a finite string $w$ is the string $w^R = w[n]\cdot w[n-1]\cdots w[1]$.
We denote by $\rot(w)$ the multiset of rotations of $w[1\dd n]$, that is, $\rot(w) = \{w[i+1\dd n]w[1\dd i] \mid i \in [1\dd n]\}$. The \emph{Burrows-Wheeler transform} (BWT) of a string $w$, denoted $\bwt(w)$, is the transformation of $w$ obtained by collecting the last symbol of all rotations in $\rot(w)$ in lexicographic order. The \emph{BWT matrix} $B(w)$ of $w$ is the $(n \times n)$-matrix where the $i$-th row is the $i$-th rotation of $w$ in lexicographic order. 

A \emph{right-infinite string} $\infw$ ---we use \textbf{boldface} to emphasize its infinite length--- over $\Sigma$ is any infinite sequence $\mathbb{Z}^+ \rightarrow \Sigma$. The set of all infinite strings over $\Sigma$ is denoted $\Sigma^\omega$. A substring of $\infw$ is the finite string $\infw[i\dd j]$ for any $1\le i \le j$. A prefix of $\infw$ is a finite substring of the form $\infw[1\dd n]$ for some $n \ge 0$.  
The \emph{substring complexity function} $P_\infw(k): \mathbb{Z}^+\cup\{0\}\rightarrow \mathbb{Z}^+$ counts the number of distinct substrings of length $k$ in $\infw$, for any $k \in \mathbb{Z}^+\cup\{0\}$, that is,  $P_\infw(k) = |\fact_\infw(k)|$. For a finite string $w[1\dd n]$, the domain of $P_w$ is restricted to $[0\dd n]$.

\subsection{Measures of repetitiveness}
\begin{figure}[t]
    \centering
    \includegraphics[width=\textwidth]{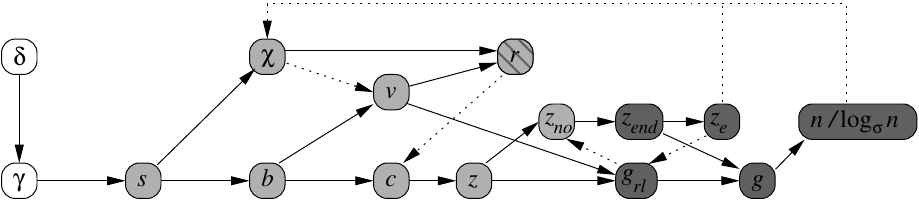}
    \caption{Relations between relevant repetitiveness measures and how our results place $\chi$ among them. An arrow $\mu_1 \to \mu_2$ means that $\mu_1=O(\mu_2)$ for all strings and, save for $s \to b$, $c \to z$, $z_{no} \to z_{end}$, and $z_{end} \to z_{e}$, a string family where $\mu_1=o(\mu_2)$ is known. The dotted arrows mark only this last condition, so they are not transitive. Measures in light gray nodes are known to be reachable; those in dark gray are accessible and searchable; and $r$ is hatched because it is searchable but not known to be accessible.}
    \label{fig:measures}
\end{figure}

In this work we will relate $\chi$, in asymptotic terms, with several well-established measures of repetitiveness~\cite{Navacmcs20.3,NavSurvey}:
$\delta = \max_{k \in [0\dd n]}(\fact_w(k)/k)$ (a measure of substring complexity), $\gamma$ (the smallest string attractor),
$s$ (the smallest Substring Equation System),
$b$ (the size of the smallest bidirectional macro scheme),
$z$ (the size of the Lempel-Ziv parse),
$\zno$ (the same without allowing phrases to overlap their sources),
$\ze$ (the size of the greedy LZ-End parse),
$\zend$ (the size of the minimal LZ-End parse),
$v$ (the size of the smallest lexicographic parse),
$r$ (the number of equal-letter runs in the BWT of the string),
$g$ (the size of the smallest context-free grammar generating only the string),
$\grl$ (the same allowing run-length rules), and
$c$ (the size of the smallest collage system generating only the string). Except for $\delta$, $\gamma$, and $r$, these measures are said to be {\em copy-paste} because they refer to a way of cutting the sequence into chunks that can be copied from elsewhere in the same sequence. Indeed, $\delta$ and $\gamma$ are lower-bound measures, the former known to be unreachable and the latter not known to date to be reachable; all the others are. The smallest measures known to be accessible (and searchable) are $\zend$ and $\grl$, and $r$ is searchable but not known to be accessible.

The known relations between those measures are summarized in Fig.~\ref{fig:measures}, where we have added the results we obtain in this paper with respect to $\chi$.

\subsection{Edit operations and sensitivity functions}

The so-called \emph{edit operations} are \emph{insertion}, \emph{substitution} and \emph{deletion} of a single character on a string. We denote $\mathtt{ins}(w)$, $\mathtt{sub}(w)$, and $\mathtt{del}(w)$ the sets of strings that can be obtained by applying an insertion, a substitution, and a deletion to $w$ respectively.  
In addition, we let $\mathtt{prepend}(w)$ and $\mathtt{append}(w)$ be $\mathtt{ins}(w)$ restricted to the insertion being made at the beginning and the end of the string, respectively, $\mathtt{rot}(w) = \mathcal{R}(w)$, and $\mathtt{rev}(w) = w^R$.

A repetitiveness measure $\mu$ is \emph{monotone} or \emph{non-decreasing} to the insertion of a single character if $\mu(w') - \mu(w) \ge 0$ for any $w$ and $w' \in  \mathtt{ins}(w)$. More generally, the \emph{additive sensitivity} and \emph{multiplicative sensitivity} functions of a repetitiveness measure $\mu$ to the insertion of a single character are the maximum possible values of $\mu(w') - \mu(w)$ and $\mu(w') / \mu(w)$, respectively. 
Formally, the additive sensitivity of a measure of repetitiveness $\mu$ to a string operation $\rho$ is defined as a function $AS_{\mu,\rho}:\mathbb{Z}^+\rightarrow\mathbb{R}$, where $AS_{\mu,\rho}(n)=\max_{w\in\Sigma^n} \max_{w'\in\rho(w)} \mu(w')-\mu(w)$, that is, the maximum achievable difference among all the strings. Similarly, the multiplicative sensitivity is defined as 
$MS_{\mu,\rho}(n)=\max_{w\in\Sigma^n} \max_{w'\in\rho(w)} \mu(w')/\mu(w)$.

We define the concept of monotonicity and sensitivity functions for the remaining string operations analogously.

\section{Suffixient Sets and the Measure \texorpdfstring{$\chi$}{chi}}

In this section we define the central combinatorial objects and measures we analyse on this work. Note that some of our definitions are slightly different from their original formulation~\cite{suffixientarrays,cop:spire2024}, because we do not always assume that all strings are $\dol$-terminated.

\begin{definition}[Right-maximal Substrings and Right-extensions~\cite{suffixientarrays,cop:spire2024}]
Let $w \in \Sigma^*$. A substring $x\in\Sigma^*$ of $w$ is \emph{right-maximal} if there exist at least two distinct symbols $a,b \in \Sigma$ such that both $xa$ and $xb$ are substrings of $w$. For any right-maximal substring $x$ of $w$, the substrings $xa$ with $a \in \Sigma$ are called \emph{right-extensions}. We denote the set of right-extensions in $w$ by $E_r(w) = \{xa\mid\exists b:b \neq a, xa \in \fact_w, xb \in \fact_w\}$.
\end{definition}

We distinguish a special class of right-extensions that are not suffixes of any other right-extension.

\begin{definition}[Supermaximal Extensions~\cite{suffixientarrays,cop:spire2024}]\label{def:supermaximal-extensions}
The set of \emph{supermaximal extensions} of $w$ is $\mathcal{S}_r(w) = \{x \in E_r(w)\mid\forall y\in E_r(w), y = zx \Rightarrow z = \varepsilon\}$.
Moreover, we let $\sre(w) = |\mathcal{S}_r(w)|$.
\end{definition}

We now define suffixient sets for strings not necessarily $\dol$-terminated; we introduce later the special terminator $\dol$.

\begin{definition}[Suffixient Set~\cite{suffixientarrays,cop:spire2024}]\label{def:suffixient}
Let $w[1\dd n] \in \Sigma^*$. A set $S \subseteq [1\dd n]$ is a \emph{suffixient set} for $w$ if for every right-extension $x \in E_r(w)$ there exists $j\in S$ such that  $x$ is a suffix of $w[1\dd j]$.
\end{definition}

Intuitively, a suffixient set is a collection of positions of $[1\dd |w|]$ capturing all the right-extensions appearing in $w$. The smallest suffixient sets, which are suffixient sets of minimum size, have also been characterized in terms of supermaximal right-extensions. The next definition simplifies the original one~\cite{suffixientarrays,cop:spire2024}.

\begin{definition}[Smallest Suffixient Set]\label{def:smallest-suffixient}
Let $w[1\dd n] \in \Sigma^*$. A suffixient set $S \subseteq [1\dd n]$ is a \emph{smallest suffixient set} for $w$ if there is a bijection $pos : \mathcal{S}_r(w) \to S$ such that every $x \in \mathcal{S}_r(w)$ is a suffix of $w[1\dd pos(x)]$.
\end{definition}

In its original formulation, the measure $\chi$ is defined over $\dol$-terminated strings. Here, we define $\chi(w)$ with the $\dol$ being implicit, not being part of $w$.

\begin{definition}[Measure $\chi$~\cite{suffixientarrays,cop:spire2024}]
Let $w \in \Sigma^*$ and assume $\dol \not \in \fact_w$. Then, $\chi(w) = |\mathcal{S}|$,  where $\mathcal{S}$ is a smallest suffixient set for $w\dol$.
\label{def:measure_chi}
\end{definition}

One can see from the above definitions that $\chi$ is well-defined because $\chi(w) = \sre(w\dol)$. We will use this relation to prove results on $\chi$ via $\sre$.

\section{Additive Sensitivity of \texorpdfstring{$\chi$}{chi} to String Operations}\label{sec:sensitivity}

The sensitivity to string operations has been studied for many repetitiveness measures~\cite{AFI23,FRSU23,FRSU25,GILPST21,GILRSU25,attractors_combinatorics,NOU25,NU25}. A robust repetitiveness measure should not change much upon small changes in the sequence. For instance, $b$, $z$, and $g$ can increase only by a multiplicative constant after an edit operation~\cite{AFI23}, and they can increase only by a constant additive factor when prepending or appending a character. On the other hand, $r$ can increase by a $\Theta(\log n)$ factor when appending a character~\cite[Prop. 37]{GILRSU25}. Other results have been obtained concerning more complex string operations, like reversing a string~\cite{GILPST21}, or applying a string morphism~\cite{FRSU23,FRSU25}. 

In this section we study how $\sre$ and $\chi$ behave in this respect, focusing on additive sensitivity; the next section deals with multiplicative sensitivity.
Overall, we obtain the following results on the additive sensitivity of $\chi$; the proof is given at the end of the section.

\begin{theorem}\label{co:AS_operations}The following bounds on the additive sensitivity of measure $\chi$ to string operations hold:
\begin{enumerate}
    \item $AS_{\chi,\rho}(n) = \Theta(1)$ for $\rho \in \{ \mathtt{append}, \mathtt{prepend} \}$;
    \item $AS_{\chi,\rho}(n)=\Omega(\sqrt n)$ for $\rho \in \{\mathtt{ins},\mathtt{del},\mathtt{sub}, \mathtt{rot}\}$;
    \item $AS_{\chi,\mathtt{rev}}(n)=\Theta(n)$.
\end{enumerate}
\end{theorem}

\subsection{Appending and prepending symbols}

We first show that $\sre(w)$ and $\chi(w)$ increase or decrease by only additive constants when adding or removing symbols at the extremes of $w$. We start by proving the following useful lemma.

\begin{lemma}\label{le:re_containment}
If $E_r(w_1) \subseteq E_r(w_2)$, then $\sre(w_1) \le \sre(w_2)$.
\end{lemma}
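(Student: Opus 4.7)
The plan is to exhibit an injection $f : \mathcal{S}_r(w_1) \to \mathcal{S}_r(w_2)$; the inequality on sizes then follows. Fix $x \in \mathcal{S}_r(w_1)$. Since $x \in E_r(w_1) \subseteq E_r(w_2)$, the set
\[
T_x = \{\, y \in E_r(w_2) \mid x \text{ is a suffix of } y \,\}
\]
is nonempty, and it is finite since $E_r(w_2) \subseteq \fact_{w_2}$. Define $f(x)$ to be any element of $T_x$ that is maximal under the suffix order within $T_x$, i.e., an element $y \in T_x$ such that no $y' \in T_x$ distinct from $y$ has $y$ as a suffix. Such an element exists by finiteness (pick any longest string in $T_x$, breaking ties arbitrarily).

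First I would check that $f(x) \in \mathcal{S}_r(w_2)$. Suppose not; then by Definition~\ref{def:super-maximal-extensions} there is $y' \in E_r(w_2)$ with $y' = zf(x)$ and $z \neq \varepsilon$. But then $x$, being a suffix of $f(x)$, is also a suffix of $y'$, so $y' \in T_x$ and $f(x)$ is a proper suffix of $y'$, contradicting the maximality chosen for $f(x)$. Hence $f(x)$ is indeed super-maximal in $w_2$.

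Next I would verify injectivity. Assume $f(x_1) = f(x_2) = y$ with $x_1, x_2 \in \mathcal{S}_r(w_1)$. Both $x_1$ and $x_2$ are suffixes of the same string $y$, so one of them is a suffix of the other; without loss of generality, $x_1$ is a suffix of $x_2$, i.e., $x_2 = z\,x_1$ for some $z$. Since $x_2 \in E_r(w_1)$ and $x_1 \in \mathcal{S}_r(w_1)$, the super-maximality of $x_1$ in $w_1$ forces $z = \varepsilon$ and hence $x_1 = x_2$. Thus $f$ is injective, yielding $\sre(w_1) \le \sre(w_2)$.

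There is no serious obstacle: the only subtlety is that $T_x$ may contain several maximal elements (strings of the same length ending in $x$ but with different prefixes), so $f$ must be defined via an arbitrary choice rather than a canonical one. Once $f$ is fixed, both the super-maximality of its image and its injectivity reduce to a one-line application of the suffix-order definition of $\mathcal{S}_r$.
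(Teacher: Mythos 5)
Your proof is correct and takes essentially the same approach as the paper's: both construct an injection sending each $x \in \sreset(w_1)$ to a super-maximal extension of $w_2$ that has $x$ as a suffix, and both derive injectivity from the fact that two distinct elements of $\sreset(w_1)$ cannot both be suffixes of the same string. The only difference is that you explicitly justify the existence of such an image via a longest element of $T_x$, a step the paper's proof asserts without elaboration.
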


\begin{proof}Let $x,y \in \sreset(w_1)$ with $x \neq y$. Because $x \in E_r(w_2)$, there exists $z \in \sreset(w_2)$ with $x$ a suffix of $z$. Because $y$ is not a suffix of $x$ and vice versa, $y$ cannot be a suffix of $z$. Therefore, the map $x \mapsto z$ with $x\in \sreset(w_1)$, $z \in \sreset(w_2)$, and $z = z'x$ for some $z' \in \Sigma^*$ is injective and then $\sre(w_1) \le \sre(w_2)$. 
\end{proof}

We now prove that $\sre(w)$ grows only by $O(1)$ when prepending or appending characters.

\begin{lemma}\label{lem:sre_append_bounded}
Let $w\in \Sigma^*$, and $c \in \Sigma$. It holds $\sre(w) \le  \sre(wc) \le \sre(w)+2$.
\end{lemma}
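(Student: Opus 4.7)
The plan is to prove the two inequalities separately. For the lower bound $\sre(w) \le \sre(wc)$, I would invoke Lemma~\ref{le:re_containment} after observing that $E_r(w) \subseteq E_r(wc)$: any $xa \in E_r(w)$ has its witnesses $xa, xb \in \fact_w \subseteq \fact_{wc}$, so it remains right-maximal in $wc$.

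For the upper bound $\sre(wc) \le \sre(w) + 2$, the plan is to bound $|\mathcal{S}_r(wc) \setminus \mathcal{S}_r(w)|$ by $2$. First I would observe that any $y \in \mathcal{S}_r(wc) \cap E_r(w)$ must already lie in $\mathcal{S}_r(w)$: if some $y' \in E_r(w)$ had $y$ as a proper suffix, then because $y' \in E_r(w) \subseteq E_r(wc)$, this would contradict $y \in \mathcal{S}_r(wc)$. Hence the ``new'' super-maximal extensions can only come from $E_r(wc) \setminus E_r(w)$.

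Next I would classify the new right-extensions $y = xa \in E_r(wc) \setminus E_r(w)$ by whether $a = c$, using that every new substring of $wc$ ends at position $|w|+1$. This yields Type~1, where $a = c$, $x$ is a suffix of $w$, and $xc \notin \fact_w$; and Type~2, where $a \neq c$ (forcing $xa \in \fact_w$), the extra witness making $x$ right-maximal in $wc$ must be $xc$ (so $x$ is a suffix of $w$ with $xc \notin \fact_w$), and $a$ is the unique symbol with $xa \in \fact_w$ (otherwise $xa$ would already belong to $E_r(w)$).

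Finally I would show that each type contributes at most one super-maximal extension. For Type~1, two candidates $x_1 c, x_2 c$ have $x_1, x_2$ both suffixes of $w$, hence one is a suffix of the other; so one $x_i c$ is a suffix of the other, and super-maximality forces $x_1 = x_2$. For Type~2, two candidates $x_1 a_1, x_2 a_2$ again have nested $x_i$; assuming $|x_1| \le |x_2|$, the string $x_1 a_2$ is a suffix of $x_2 a_2 \in \fact_w$, so $x_1 a_2 \in \fact_w$, and uniqueness of the in-$w$ extension of $x_1$ gives $a_1 = a_2$; then $x_1 a_1$ is a proper suffix of $x_2 a_2 \in E_r(wc)$ unless $x_1 = x_2$. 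Adding the two contributions, $|\mathcal{S}_r(wc) \setminus \mathcal{S}_r(w)| \le 1 + 1 = 2$, which gives the desired bound. The delicate step will be the Type~2 argument, where I must combine the ``uniqueness'' coming from the failure of right-maximality in $w$ with the nesting of the $x_i$'s to collapse two candidates into one.
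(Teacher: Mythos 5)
Your proof is correct and takes essentially the same route as the paper: the lower bound via the containment lemma, and the upper bound by showing that the new super-maximal extensions split into at most one ending in $c$ (a suffix of $wc$, using that suffixes of $w$ are nested) and at most one of the form $xa$ with $a\neq c$ arising from a suffix $x$ of $w$ previously always followed by $a$ (collapsed by the same nesting-plus-uniqueness argument). Your explicit preliminary step that $\mathcal{S}_r(wc)\cap E_r(w)\subseteq\mathcal{S}_r(w)$ is a nice touch that the paper leaves implicit, but the substance is identical.
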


\begin{proof}
The lower bound follows from Lemma \ref{le:re_containment}.
For the upper bound, we analyse the new right-extensions that may arise due to appending $c$ to $w$. For any fixed suffix $xc$ of $wc$:
\begin{enumerate}
\item if $xc$ appears in $w$, or if $xa$ does not appear in $w$ for any $a \neq c$, or both, then $xc$ induces no new right-extensions in $wc$;
\item if for some $a \neq b$, $xa$ and $xb$ were both substrings of $w$, and $xc$ was not, then $xc$ is a new right-extension of $wc$;
\item if $x$ is always followed by $a\neq c$ in $w$ (hence, $xa$ is not a right-extension of $w$), then both $xa$ and $xc$ are new right-extensions of $wc$.
\end{enumerate}

Cases 1 and 2 induce at most one new supermaximal right-extension in total for all possible $xc$, namely the longest right-extension in $wc$ that is a suffix of $wc$.
For Case 3, consider all the increasing-length suffixes \sloppy $x_1c, x_2c, \dots, x_tc$ of $wc$ that became right-extensions together with $x_1a, x_2a, \dots, x_ta$. Since the latter form a chain of suffixes of $x_ta$, we only have one possible new supermaximal right-extension ending with $a$, namely $x_ta$. 
Observe that the chain of suffixes $x_1a, x_2a, \dots, x_ta$ is unique: if the suffix $x$ is always followed by $a$, any suffix $y$ of $x$ is either right-maximal in $w$ (and $y$ falls within Case 2), or it is always followed by an $a$ (because $x$ is always followed by an $a$), i.e. $y=x_i$ for some $i\in[1\dd t]$.  
\end{proof}

\begin{lemma} \label{lem:sre_prepend_bounded}
Let $w\in \Sigma^*$ and $c \in \Sigma$. It holds $\sre(w) \le  \sre(cw) \le \sre(w) + 2$.
\end{lemma}

\begin{proof}The lower bound follows from Lemma \ref{le:re_containment}. For the upper bound, let $cxa$ be the shortest prefix of $cw$ that is not a right-extension of $w$, but is a right-extension of $cw$ (if it exists). This implies that there exists $b\neq a$ such that $cxb \in \fact_w$ and $cxa \not \in \fact_w$ (otherwise, $cxa$ would be a right-extension of $w)$, so no prefix of $cw$ of length $|cxa|$ or more is right-maximal, and thus no prefix longer than $cxa$ can be a right-extension. By the minimality of $cxa$, all prefixes of $cw$ shorter than $cxa$ either are not right-extensions of $cw$, or are right-extensions of both $w$ and $cw$. Therefore, $cxa$ together  with some $cxb$ appearing in $w$, are the only possible new right-extensions in $cw$ with respect to $w$.
\end{proof}

Lemmas~\ref{lem:sre_append_bounded} and \ref{lem:sre_prepend_bounded} yield remarkably simple algorithms to compute smallest suffixient sets, which we detail in Section~\ref{sec:ukkonen}. 

\medskip

By letting $c = \dol \not \in \fact_w$ in Lemma~\ref{lem:sre_append_bounded},  we relate $\chi$ to $\sre$ (note that $\chi$ is always at least $\sre + 1$ because of the new supermaximal extension ending with $\dol$). This makes clear the relation between Combinatorics on words~\cite{LothaireAlg} with suffixient sets, via the common notion of \emph{right-special factors} (which we call here right-maximal substrings).

\begin{corollary}\label{cor:sre_chi}
 Let $w \in \Sigma^*$. 
 It holds $\sre(w)+1 \le \chi(w) \le \sre(w)+2$.
\end{corollary}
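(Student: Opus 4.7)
The plan is to combine Definition~\ref{def:measure_chi}, which gives $\chi(w)=\sre(w\dol)$, with Lemma~\ref{lem:sre_append_bounded} applied to $c=\dol$. The upper bound $\chi(w)\le \sre(w)+2$ then drops out immediately.

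For the lower bound $\chi(w)\ge \sre(w)+1$, the plan is to sharpen the injection $\iota:\sreset(w)\to\sreset(w\dol)$ constructed in the proof of Lemma~\ref{le:re_containment}---which sends each $x\in\sreset(w)$ to some super-maximal right-extension $z$ of $w\dol$ having $x$ as a suffix---by exhibiting an element of $\sreset(w\dol)$ outside its image. The natural candidate is $y$, the longest right-extension of $w\dol$ ending in the symbol $\dol$. Assuming $w\neq\epsilon$, $y$ exists because $\dol$ and $w[1]$ are distinct length-$1$ substrings of $w\dol$, which makes $\epsilon$ right-maximal in $w\dol$ and hence $\dol\in E_r(w\dol)$. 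Moreover, $y$ is super-maximal: any right-extension of $w\dol$ strictly containing $y$ as a suffix would also end in $\dol$, contradicting the maximality of $|y|$.

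To finish, I would verify that $y\notin\iota(\sreset(w))$. If $y=\iota(x)$ for some $x\in\sreset(w)$, then $x$ is a non-empty suffix of $y$ and so ends in $\dol$; but $x$ is a right-extension of $w$, hence a substring of $w$, which by hypothesis cannot contain $\dol$---a contradiction. Therefore $\sre(w\dol)\ge\sre(w)+1$.

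The whole argument is a direct strengthening of the lower bound in Lemma~\ref{lem:sre_append_bounded}; the only idea needed is that the freshness of $\dol$ forces at least one genuinely new super-maximal right-extension (the one that ends at the appended terminator), whose non-empty suffixes must all end in $\dol$ and thus cannot coincide with any element of $\sreset(w)$.
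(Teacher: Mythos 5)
Your proof is correct and follows essentially the same route as the paper, which obtains this corollary by setting $c=\dol$ in Lemma~\ref{lem:sre_append_bounded}; the paper leaves the improvement from $\sre(w)$ to $\sre(w)+1$ in the lower bound implicit, and your argument (the longest right-extension ending in the fresh symbol $\dol$ is a new super-maximal extension that cannot correspond to any element of $\sreset(w)$) is exactly the observation that justifies it, matching the ``longest right-extension that is a suffix of $wc$'' identified in that lemma's proof. Your explicit caveat that $w\neq\epsilon$ is needed for the lower bound is a fair point the paper glosses over.
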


Note that, while the value $\sre(w)$ is non-decreasing after appending a character, this is not the case for the measure $\chi$.

\begin{lemma}\label{le:chi_non_monotone}
The measure $\chi$ is not monotone to appending a character.
\end{lemma}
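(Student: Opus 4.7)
The plan is to exhibit an explicit counterexample. Take $w = \a\b\a\a\b$ and $c = \a$, so $wc = \a\b\a\a\b\a$, and verify $\chi(w) = 4 > 3 = \chi(wc)$. By Definition~\ref{def:measure_chi}, this reduces to comparing $\sre(w\dol)$ and $\sre(wc\dol)$. My approach is mechanical: enumerate the right-maximal substrings of each $\dol$-terminated string (those followed by $\geq 2$ distinct characters), list their right-extensions, and keep only the super-maximal ones, i.e., those not a proper suffix of another right-extension.

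For $w\dol = \a\b\a\a\b\dol$, the right-maximal substrings are $\epsilon$, $\a$, $\b$, and $\a\b$: importantly, $\b$ and $\a\b$ are right-maximal because each occurs as a suffix of $w$ and also elsewhere in $w$, so $\dol$ supplies a second distinct follower. This yields the right-extension set $\{\a, \b, \dol, \a\a, \a\b, \b\a, \b\dol, \a\b\a, \a\b\dol\}$, with super-maximal subset $\sreset(w\dol) = \{\a\a, \a\b, \a\b\a, \a\b\dol\}$; hence $\chi(w) = 4$.

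For $wc\dol = \a\b\a\a\b\a\dol$, appending $\a$ shifts the suffix structure by one, so that $\b\a$ and $\a\b\a$ (rather than $\b$ and $\a\b$) are the suffixes of $wc$ with non-suffix occurrences in $wc$, and they become the new right-maximal substrings in $wc\dol$. A parallel computation yields right-maximal substrings $\epsilon, \a, \b\a, \a\b\a$, right-extensions $\{\a, \b, \dol, \a\a, \a\b, \a\dol, \b\a\a, \b\a\dol, \a\b\a\a, \a\b\a\dol\}$, and super-maximal subset $\sreset(wc\dol) = \{\a\b, \a\b\a\a, \a\b\a\dol\}$; hence $\chi(wc) = 3$. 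The essential collapse is that the new right-extension $\b\a\a$ in $wc\dol$ absorbs the formerly super-maximal $\a\a$ of $w\dol$ as a proper suffix, and this is what drops $\chi$ by one.

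The only obstacle in carrying this out is careful bookkeeping: one must recall that $xa \in E_r$ requires the prefix $x$ (not $xa$) to be right-maximal, and that a suffix of $w$ with no non-suffix occurrence in $w$ contributes no follower when passing to $w\dol$. With these conventions in mind, the verification is a finite routine check.
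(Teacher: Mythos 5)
Your proposal is correct and uses exactly the paper's counterexample $w=\str{abaab}$ with $c=\a$; the paper simply asserts $\chi(w)=4$ and $\chi(w\a)=3$, while you carry out the verification explicitly, and your enumerations of the right-maximal substrings, right-extensions, and super-maximal extensions of $\a\b\a\a\b\dol$ and $\a\b\a\a\b\a\dol$ all check out.
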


\begin{proof}Let $w = \str{abaab}$.
It holds $\sreset(w\dol)  = \{\str{aa}, \str{ab}, \str{ab\dol}, \str{aba}\}$ and $\sreset(w\a\dol) = \sreset(\str{abaaba\dol}) = \{\str{ab}, \str{aba\dol}, \str{abaa}\}$. 
Hence, $\chi(w) = 4$ and $\chi(w\a) = 3$. 
\end{proof}

\subsection{Sensitivity to edit operations and rotations}

We first prove some asymptotic relations between the additive sensitivities of edit operations and rotations.

\begin{lemma} \label{lem:edit-vs-rot-additive}
$AS_{\chi,\rho}(n)=\Theta(AS_{\chi,\mathtt{rot}}(n))$ for $\rho \in \{\mathtt{del},\mathtt{sub}\}$, and $AS_{\chi,\mathtt{ins}}(n)=O(AS_{\chi,\mathtt{rot}}(n))$.
\end{lemma}
\begin{proof}
We start proving $AS_{\chi,\rho}(n)=O(AS_{\chi,\mathtt{rot}}(n))$, with $\rho=\mathtt{ins}$.
Let $|xy|=n$ and let us insert $c\in\Sigma$ between $x$ and $y$. Let us also shorten $R(n) = AS_{\chi,\mathtt{rot}}(n)$. Then it holds that
\begin{eqnarray*}
\sre(xcy) &\le& \sre(yxc) + R(n+1) \le \sre(yx) + 2 + R(n+1)\\
&\le& \sre(xy) + R(n)+2+R(n+1),
\end{eqnarray*}
where the second inequality stems from Lemma~\ref{lem:sre_append_bounded}. Since $1 \le R(n) = O(n)$ because $\sre = O(n)$, it follows that $\sre(xcy) = \sre(xy) + O(AS_{\chi,\mathtt{rot}}(n))$. 

For $\rho=\mathtt{del}$, we instead delete $d$ from $xdy$, so we have
$$\sre(xy) \le \sre(yx) + R(n) \le \sre(yxd) + R(n)
\le \sre(xdy) + R(n+1)+R(n),
$$
where again the second inequality stems from Lemma~\ref{lem:sre_append_bounded}. By combining both equations above, we ge a bound for $\rho=\mathtt{sub}$, with $n=|xcy|$:
$$
\sre(xcy) \le \sre(yx) +2+R(n) \le 
\sre(yxd)+4+R(n) \le \sre(xdy)+4+2R(n).
$$

In the other direction, for $AS_{\chi,\mathtt{rot}}(n)=O(AS_{\chi,\mathrm{del}}(n))$, let $\dol$ be a symbol not in $xy$ and $D(n)=AS_{\chi,\mathtt{del}}(n)$. Then
$$
\sre(yx) \le \sre(y\dol x) + D(n+1) \le \sre(xy\dol) + D(n+1) \le \sre(xy) + 2+D(n+1).
$$
The second inequality stems from $\dol$ being unique in $y\dol x$, so no right-maximal string can contain it. Thus, $E_r(y \dol x) = E_r(y\dol) \cup E_r(x) \cup  U\subseteq E_r(xy\dol)$, where $U=\{ua,ub\,|\,a \neq b \land ua\in \fact_x \land ub \in \fact_{y\dol}\}$. By Lemma~\ref{le:re_containment}, this implies $\sre(y\dol x) \le \sre(xy\dol)$. For substitutions, let $S(n)=AS_{\chi,\mathrm{sub}}(n)$, and consider strings $x$ and $yc$, with $c \in \Sigma$ and $n=|ycx|$:
\begin{eqnarray*}
\sre(ycx) &\le& \sre(y\dol x) + S(n) \le \sre(xy\dol) + S(n) \\ 
&\le& \sre(xy) + 2+S(n) \le \sre(xyc) + 2 + S(n).\qquad\qquad\qquad\qedhere
\end{eqnarray*}
\end{proof}

We now show that $\sre$ can grow by $\Omega(\sqrt{n})$ upon arbitrary edits and rotations.

\begin{lemma}
\label{lem:chi_of_w}
 Let $w_m = \a\b^{2m}\a\b^{2m+2}\prod_{k=1}^{m-1}u_k$, where $u_k = (\a\b^k\a\b^{2m-k})^2$ for some $m\geq3$. Then, $\sre(w_m) = 6m+4$ holds.
\end{lemma}

\begin{proof}
Let us consider the prefix $v = \a\b^{2m}\a\b^{2m+2}$ of $w$. 
 The substrings $\a\b^{2m}$ and $\b^{2m+1}$ appear twice in $v$, once followed by $\a$ and once followed by $\b$.
 For each $k\in[1,m-1]$: the substring $\b^k\a\b^{2m-k}$ appears three times, once in $v$, and twice in $u_k$ (note that one occurrence of $\b^{m-1}\a\b^{m+1}$ is a suffix of $w_m$); the substring $\b^{2m-k+1}\a\b^{k}$ appears once in $v$ and twice across $u_k u_{k+1}$;  the substring $\b^{2m-k}\a\b^{k}\a\b^{2m-k}\a\b^{k}$ appears twice across $u_{k-1} u_k u_{k+1}$. 
 
 By carefully analyzing the right-extensions of these factors, one can verify that the following substrings are supermaximal right-extensions of $w_m$:
 \begin{enumerate}
     \item  \label{bullet:additive1}$\a\b^{2m}\a$ and $\a\b^{2m+1}$,
     \item \label{bullet:additive2} $\b^k\a\b^{2m-k}\a$ and $\b^k\a\b^{2m-k+1}$ (for $k\in[1,m-1]$),
     \item \label{bullet:additive3}$\b^{2m-k+1}\a\b^{k}\a$ and $\b^{2m-k+1}\a\b^{k+1}$ (for $k\in[1,m-1]$),
     \item \label{bullet:additive4}$\b^{2m-k}\a\b^{k}\a\b^{2m-k}\a\b^{k}\a$ and $\b^{2m-k}\a\b^{k}\a\b^{2m-k}\a\b^{k+1}$ (for $k\in[1,m-2]$),
     \item \label{bullet:additive5}$\b^{2m+1}\a$ and $\b^{2m+2}$.
 \end{enumerate}

 We now prove that there are no other super-maximal right-extensions in $w$.
 Let us consider the other right-extensions (also see Figure~\ref{fig:chi_of_w}). For any other right-extensions $x$ ending in $\a$, we have that it is a suffix of one of the supermaximal right-extensions, since every occurrence of $\a$, except the first and last $\a$, serves as the ending position of a supermaximal right-extension listed above; the longest right-extensions ending in correspondence of the first and last $\a$ are $\a$ and $\b^{m+1}\a\b^{m-1}\a$ respectively, and both are suffixes of $\b^{m+2}\a\b^{m-1}\a$ (case~\ref{bullet:additive3}. with $k=m-1$).
 For any other right-extension ending in $\b$: by construction, any substring including $\a$ at least twice is not a right-extension, with the exception of $\b^{2m-k}\a\b^{k}\a\b^{2m-k}\a\b^{k+1}$ for $k\in[1,m-2]$ (included in case~\ref{bullet:additive4}.); 
 all right-extensions containing only one $\a$ have the form $\b^i\a\b^j$, for any $i\geq0$,$j\geq 1$ such that $i+j\leq 2m+1$ and $j\ne m$, and each is a suffix of one of the supermaximal right-extensions included in either cases~\ref{bullet:additive2}.\ or~\ref{bullet:additive3}.; finally, the right-extensions that do not contain $\a$'s consist of runs of $\b$'s, which are all suffixes of $\b^{2m+2}$ (included in~\ref{bullet:additive5}.).
Therefore, there are no other supermaximal extensions, and $\sre(w_m)=6m-4$.
\end{proof}

 \begin{figure}[t]
    \centering
    \includegraphics[width=\textwidth]{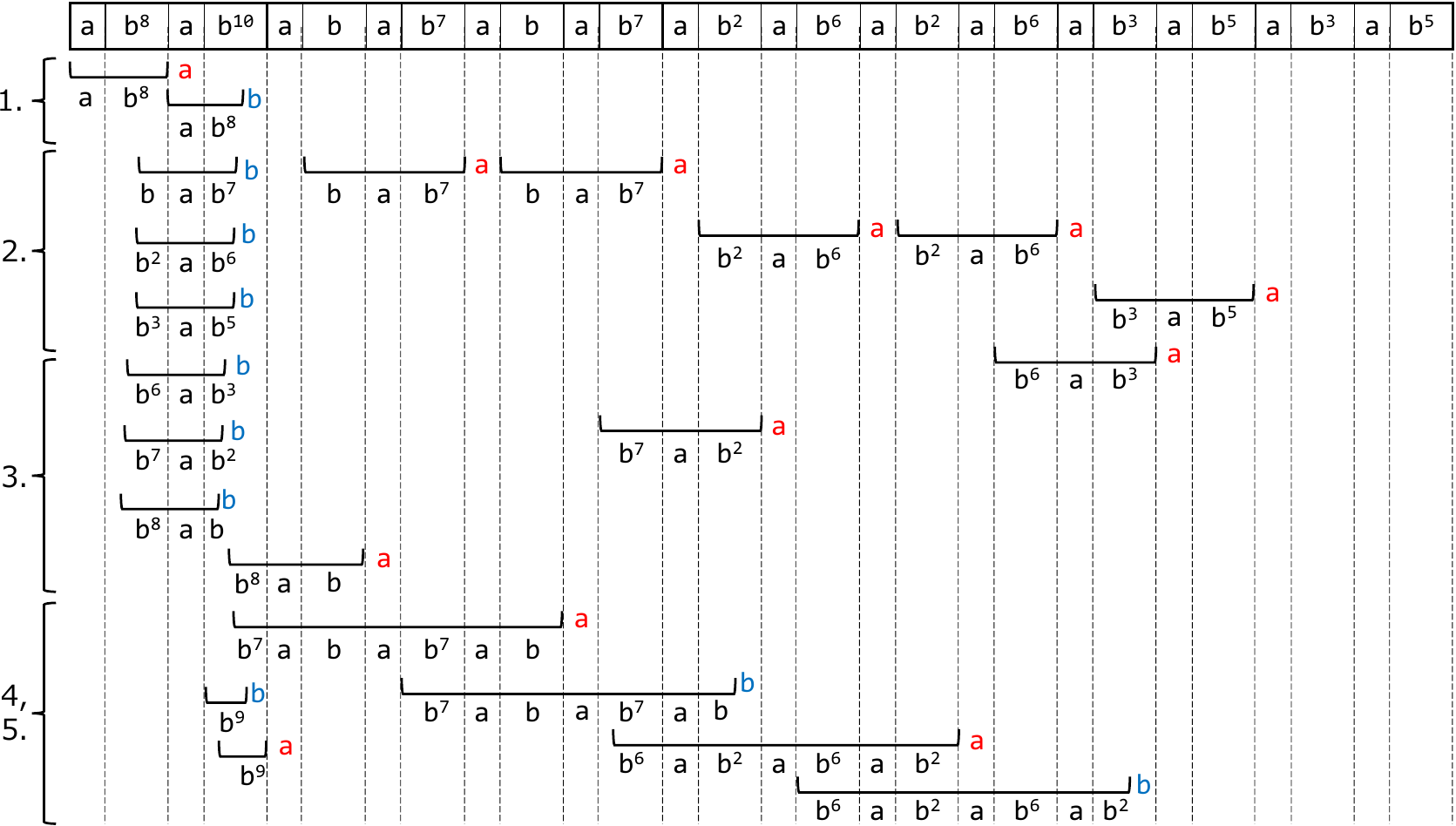}
    \caption{The supermaximal right-extension of the string $w_4$ shown in Example \ref{ex:chi_of_w}. Note that the ending positions of those ending within $\b^{10}$ are all different. }
    \label{fig:chi_of_w}
\end{figure}

 \begin{example}
 \label{ex:chi_of_w}
 Let $w_4 = \a\b^8\a\b^{10}\a\b\a\b^7\a\b\a\b^7\a\b^2\a\b^6\a\b^2\a\b^6\a\b^3\a\b^5\a\b^3\a\b^5$ (also see Figure \ref{fig:chi_of_w}).
 It can be verified that the supermaximal right-extensions of $w_4$ are:
    \begin{enumerate}
     \item $\a\b^8\a$ and $\a\b^9$,
     \item $\b\a\b^7\a$ and $\b\a\b^8$; $\b^2\a\b^6\a$ and $\b^2\a\b^7$; $\b^3\a\b^5\a$ and $\b^3\a\b^6$,
     \item $\b^8\a\b\a$ and $\b^8\a\b^2$; $\b^7\a\b^2\a$ and $\b^7\a\b^3$; $\b^6\a\b^3\a$ and $\b^6\a\b^4$,
     \item $\b^7\a\b\a\b^7\a\b\a$ and $\b^7\a\b\a\b^7\a\b^2$; $\b^6\a\b^2\a\b^6\a\b^2\a$ and $\b^6\a\b^2\a\b^6\a\b^3$,
     \item $\b^9\a$ and $\b^{10}$.
    \end{enumerate}
 One can see that $\sre(w_4) = 20$, as stated in Lemma \ref{lem:chi_of_w}.
 \end{example}

We next provide the lower-bound families for edit operations and rotations. When convenient, we describe the witness string via the reverse transformation; the labels (Ins), (Del), (Sub), and (Rot) refer to the operations whose sensitivity is being witnessed.

\begin{lemma} \label{lem:new_edit_additive_sensitivity}
 Let $w_m = \a\b^{2m}\a\b^{2m+2}\prod_{k=1}^{m-1}u_k$, where $u_k = (\a\b^k\a\b^{2m-k})^2$ for some $m\geq3$. It holds:
\begin{enumerate}
\item[(Ins)] If $w_m' = \a\b^{4m+2}\prod_{k=1}^{m-1}u_k$, then $\sre(w_m') = 4m - 2$;
\item[(Sub)] If $w_m' = \a\b^{4m+3}\prod_{k=1}^{m-1}u_k$, then $\sre(w_m') = 4m - 2$;
\item[(Del)] If $w_m' = \a\b^{2m}\a\a\b^{2m+2}\prod_{k=1}^{m-1}u_k$, then $\sre(w_m') = 4m + 2$;
\item[(Rot)] If $w_m' = \a\b^{2m+2}\prod_{k=1}^{m-1}u_k\cdot\a\b^{2m}$, then $\sre(w_m') = 4m - 2$.
\end{enumerate}
\end{lemma}

\begin{proof}
We prove the claims separately.

\noindent\textbf{Insertions.}
Let us consider the string $w_m'$ obtained by removing the second $\a$ from $w_m$. 
Then, the prefix $v=\a\b^{2m}\a\b^{2m+2}$ of $w_m$ is changed to $v'=\a\b^{4m+2}$. The substring $\b^{2m-k+1}\a\b^k$ occurs only once for each $k\in[1..m-1]$, and any subtring $\b^i\a\b^j$ of $w_m$ with $i+j >2m$ does not occur in $w'_m$. 
Instead, $\b^{4m+1}$ occurs twice in $v'$, and $\b^k\a\b^{2m-k-1}\a$ and $\b^k\a\b^{2m-k}$ become new supermaximal right-extensions for each $k\in [1,m-2]$, occurring twice in $u_{k+1}$ and $u_k$, respectively. Also, $\a\b^{2m-1}$ appears once in $v'$ and twice in $u_1$, and $\b^{m-1}\a\b^{m-1}$ can be found three times in $u_{m-1}$. Wrapping up and analyzing its right-extensions, one can verify that the supermaximal right-extensions of $w_m'$ and their ending positions are the following:
\begin{enumerate}
     \item $\a\b^{2m-1}\a$ (ending at $(2m+3)$ in $v'$), and $\a\b^{2m}$ (ending at $(2m+3)$ in $u_1$ and $1$ in $u_2$),
     \item $\b^k\a\b^{2m-k-1}\a$ (ending at $(2m+3)$ in $u_{k+1}$ and $1$ in $u_{k+2}$) and $\b^k\a\b^{2m-k}$ (ending at $(2m+2)$ and $(4m+4)$ in $u_k$) for $k\in[1,m-2]$,
     \item $\b^{2m-k}\a\b^{k}\a\b^{2m-k}\a\b^{k}\a$ (ending at $(2m+k+4)$ in $u_k$)and $\b^{2m-k}\a\b^{k}\a\b^{2m-k}\a\b^{k+1}$ (ending at $(k+2)$ in $u_{k+1}$) for $k\in[1,m-2]$,
     \item $\b^{4m+1}\a$ (ending at $(4m+4)$ in $v'$) and $\b^{4m+2}$ (ending at $(4m+3)$ in $v'$),
     \item $\b^{m-1}\a\b^{m-1}\a$ (ending at $(2m+1)$ in $u_{m-1}$) and $\b^{m-1}\a\b^m$ (ending at $(m+1)$ and $(3m+3)$ in $u_{m-1}$) .
 \end{enumerate}
Thus, we have $\sre(w_m') = 4m-2$.

\noindent\textbf{Substitutions.}
Assuming that the string $w_m'$ is obtained by replacing the second $\a$ in $w_m$ with $\b$, we obtain a string similar to the one we consider in case~\textbf{Insertions}. Considering its structure, we have that the supermaximal right-extensions of $w_m'$ are the following:

\begin{enumerate}
     \item $\a\b^{2m-1}\a$ and $\a\b^{2m}$,
     \item $\b^k\a\b^{2m-k-1}\a$ and $\b^k\a\b^{2m-k}$ (for $k\in[1,m-2]$),
     \item $\b^{2m-k}\a\b^{k}\a\b^{2m-k}\a\b^{k}\a$ and $\b^{2m-k}\a\b^{k}\a\b^{2m-k}\a\b^{k+1}$ (for $k\in[1,m-2]$),
     \item $\b^{4m+2}\a$ and $\b^{4m+3}$,
     \item $\b^{m-1}\a\b^{m-1}\a$ and $\b^{m-1}\a\b^m$.
 \end{enumerate}
Thus, we have $\sre(w_m') = 4m-2$.

\noindent\textbf{Deletions.}
Let us consider the string $w_m'$ obtained by adding $\a$ just following the second $\a$ of $w_m$. 
Same as above, $\b^{2m-k+1}\a\b^k$ occurs only once for each $k\in[1..m-1]$, and any other occurrences of $\b^i\a\b^j$ disappears when $i+j >2m$.
This is almost as the previous case, but we have to consider the right-extension of $\b^{2m}\a$ found twice, once in the prefix $v'=\a\b^{2m}\a\a\b^{2m+2}$ and once across $v'u_1$, whereas $\a\b^{2m}\a$ and $\a\b^{2m+1}$ still remain supermaximal right-extensions. We then have that the elements of $\sreset(w_m')$ are the following:
\begin{enumerate}
     \item $\a\b^{2m}\a$ and $\a\b^{2m+1}$,
     \item $\a\b^{2m-1}\a$ and $\a\b^{2m}$,
     \item $\b^k\a\b^{2m-k-1}\a$ and $\b^k\a\b^{2m-k}$ (for $k\in[1,m-2]$),
     \item $\b^{2m-k}\a\b^{k}\a\b^{2m-k}\a\b^{k}\a$ and $\b^{2m-k}\a\b^{k}\a\b^{2m-k}\a\b^{k+1}$ (for $k\in[1,m-2]$),
     \item $\b^{2m}\a\a$ and $\b^{2m}\a\b$,
     \item $\b^{2m+1}\a$ and $\b^{2m+2}$,
     \item $\b^{m-1}\a\b^{m-1}\a$ and $\b^{m-1}\a\b^m$.
 \end{enumerate}
Thus, we have $\sre(w_m') = 4m+2$.

\noindent\textbf{Rotations.}
Lastly, we consider the string $w_m'$ that is a rotation of $w_m$ beginning in the second $\a$. 
Note that $u_m$ is now followed by $\a\b^{2m}$.
Hence, with respect to $w_m$, the substring $\b^{m+1}\a\b^{m-1}\a\b^{m+1}\a\b^{m-1}$ occurs twice in $w'_m$. 
Thus, we have that the elements of $\sreset(w_m')$ are the following:
\begin{enumerate}
     \item $\b^{2m+1}\a$ and $\b^{2m+2}$,
     \item $\b^k\a\b^{2m-k-1}\a$ and $\b^k\a\b^{2m-k}$ (for $k\in[1,m-1]$),
     \item $\b^{2m-k}\a\b^{k}\a\b^{2m-k}\a\b^{k}\a$ and $\b^{2m-k}\a\b^{k}\a\b^{2m-k}\a\b^{k+1}$ (for $k\in[1,m-1]$).
 \end{enumerate}
 Thus, we have $\sre(w_m') = 4m-2$.
 \end{proof}

\begin{corollary} \label{lem:8}
There exists a string family where $\sre(w)-\sre(w') \in \Omega(\sqrt{n})$.
\end{corollary}
\begin{proof}
The strings $w_m$ in Lemmas~\ref{lem:chi_of_w} and \ref{lem:new_edit_additive_sensitivity} have length $\Theta(m^2)$. Since $\sre(w_m) = 6m+4$ and $\sre(w_m') \le 4m+2$ for the string $w_m'$ that results from applying any edit operation or rotation on $w_m$, the result follows.
\end{proof}

\subsection{Sensitivity to string reversal}

We now show that $\sre$ can grow by $\Omega(n)$ upon string reversals. We remind that the maximum additive growth is always $O(n)$ because $\chi \le n$.

\begin{lemma}\label{le:reverse_additive_sensitivity}There exists a string family where $\sre(w^R) -\sre(w) = n/5-1$.
\end{lemma}

\begin{proof}
Such a family is composed of the strings
$w_k= \prod_{i=1}^k \c \a_i\htag_i \a_i\dol_i$ on the alphabet $\Sigma = \bigcup_{i=1}^k\{\a_i,\htag_i,\dol_i\}\cup\{\c\}$. The size of $w_k$ is $  n = 5k$.

Observe that any substring of $w_k$ containing $\htag_i$ or $\dol_i$ is not right-maximal because those symbols occur once.  Moreover, every substring of length $2$ of $w_k$ is also unique and not right-maximal. Therefore, all the supermaximal extensions in $w_k$ and $w_k^R$ have length at most $2$. 

One can verify that the set of supermaximal extensions of $w_k$ is $\sreset(w_k) =\bigcup_{i=1}^k\{\a_i\htag_i, \a_i\dol_i, \c\a_i\} \cup\{\c\}$. Thus, $\sre(w_k)=3k+1$.

On the reversed string, one can verify that the set of supermaximal extensions of $w_k^R$ is $\sreset(w_k^R) =\bigcup_{i=1}^k\{\a_i\c, \a_i\htag_i, \a_i\} \cup \bigcup_{i=1}^{k-1}\{\c\dol_i\}\cup\{\dol_k\}$. Thus, $\sre(w_k^R)=4k$, and the claim follows.
\end{proof}

\begin{proof}[Proof (of Theorem~\ref{co:AS_operations})]
    We have obtained along the section those results in terms of $\sre$, which by Corollary \ref{cor:sre_chi} can be written in terms of $\chi$.
    Claim 1 follows by Lemmas~\ref{lem:sre_append_bounded} and~\ref{lem:sre_prepend_bounded}.
    Claim 2 follows by Lemma~\ref{lem:new_edit_additive_sensitivity}, where $n=|w_m|=\Theta(m^2)$ and $AS_{\chi,\rho}(n) = \Omega(m) = \Omega(\sqrt{n})$, for all $\rho \in \{\mathtt{ins},\mathtt{del},\mathtt{sub}, \mathtt{rot}\}$.
    The $\Omega(n)$ part of Claim 3 follows by Lemma~\ref{le:reverse_additive_sensitivity}, where $n=|w_k|=5k=\Theta(k)$ and $AS_{\chi,\mathtt{rev}}(n) = \Omega(k) = \Omega(n)$; the $O(n)$ part is a consequence of $\chi$ being $O(n)$.
\end{proof}

\section{Multiplicative Sensitivity of \texorpdfstring{$\chi$}{chi} to String Operations}

We now focus our attention to multiplicative sensitivity. We first show that $\chi$ has a multiplicative sensitivity to reversals that is a constant strictly larger than $1$. Then we show a relation between the multiplicative sensitivities of the other operations. We finally express both additive and multiplicative sensitivities in terms of the substring complexity $\delta$.

\subsection{Sensitivity to edits and rotations}

A simple consequence of previous results is that the multiplicative sensitivity of edits and rotations is at least $3/2 -  o(1)$.

\begin{corollary} \label{lem:12}
There exists a string family where $\lim_{n\rightarrow \infty }\sre(w)/\sre(w')=3/2$ upon edit operations or rotations.
\end{corollary}
\begin{proof}
The strings $w_m$ in Lemmas~\ref{lem:chi_of_w} and \ref{lem:new_edit_additive_sensitivity} satisfy $\sre(w_m) = 6m+4$, whereas  $\sre(w_m') \le 4m+2$ holds for the string $w_m'$ that results from applying any edit operation or rotation on $w_m$. The result follows.
\end{proof}

We conjecture that the multiplicative sensitivity of all those operations is indeed constant. The following results relate them all in this sense.

\begin{lemma}For any  $\mathtt{op}_1,\mathtt{op}_2\in \{\mathtt{ins},\mathtt{del},\mathtt{sub},\mathtt{rot}\}$, $\sre(\mathtt{op}_1(w))/\sre(w)=\Theta(1)$ if and only if $\sre(\mathtt{op}_2(w))/\sre(w)=\Theta(1)$.
\end{lemma}

\begin{proof}
Suppose $\sre(w')/\sre(w)=\Theta(1)$ for any $w=xy$ and $w'=yx$ a rotation of $w$. Then, writing $X \approx Y$ for $X = \Theta(Y)$, it holds 
$$\sre(xcy) \approx \sre(yxc) \approx \sre(yx) \approx \sre(xy)$$
and
$$\sre(xcy) \approx \sre(yxc) \approx \sre(yxd) \approx \sre(xdy),$$
which implies $\sre(w')/\sre(w)=\Theta(1)$ for all edit operations. This holds because appending symbols change $\sre$ by $O(1)$ only.

Now assume $\sre(w')/\sre(w)=\Theta(1)$ for any $w=xy$ and $w'=xcy$ (insertion) or vice versa (deletion). Then, using a $\dol$ as in Lemma~\ref{lem:edit-vs-rot-additive},
$$\sre(xy) \approx \sre(x\dol y) \le \sre(yx\dol) = O(\sre(yx)),$$
$$\sre(yx) \approx \sre(y\dol x) \le \sre(xy\dol) = O(\sre(xy)),$$
and thus $\sre(xy) \approx \sre(yx)$. Finally, assume $\sre(w')/\sre(w)=\Theta(1)$ for any $w=xcy$ and $w'=xdy$ (substitution). Then
$$\sre(xcyd) \approx \sre(x\dol yd) \le \sre(ydx\dol) = O(\sre(ydxc)),$$
$$\sre(ydxc) \approx \sre(y\dol xc) \le \sre(xcy\dol) = O(\sre(xcyd)),$$
thus $\sre(xcyd) \approx \sre(ydxc)$, which includes every rotation.
\end{proof}

\subsection{Sensitivity to string reversals}

The proof of Lemma~\ref{le:reverse_additive_sensitivity} immediately yields a lower bound of $4/3-o(1)$.

\begin{corollary}\label{cor:reverse_multiplicative_sensitivity}
There exists a string family where $\lim_{n\rightarrow \infty }\sre(w^R)/\sre(w)=4/3$.
\end{corollary}

\begin{proof}
For the family described in the proof of Lemma~\ref{le:reverse_additive_sensitivity}, it holds that 
$\sre(w_k)=3k+1$ and $\sre(w_k^R)=4k$.
\end{proof}

We now prove that $\chi$ at most doubles when we reverse the string, by studying how $\sre$ behaves.
For simplicity of analysis, we assume that $w$ contains a special symbol $\dol$ at the beginning and at the end (this works because it is not hard to see that $\chi(w) = \sre(w\$) = \sre(\dol w\dol)$ for every $w$). We do this to (i) ensure all right-maximal substrings of $w$ have at least two left-extensions, and hence, right-extensions have at least one left-extension, and (ii) provide the needed symmetry to prove Lemma~\ref{le:reverse_multiplicative} below. 

In the following, we denote the set of symbols preceding a substring $x$ (or \emph{left-extensions} of $x$) of $w$ by $L_x(w)$, and we use $l_x(w)$ to denote an arbitrary element of $L_x(w)$ (we drop $w$ if it is clear from the context). 
Please note that left-extensions are not analogous to right-extensions. A right-extension is a string $xa$ such that $xb$ also occurs for some $b \neq a$, while a left-extension of $x$ is just any symbol preceding $x$. A left-extension can be unique, whereas right-extensions come  (at least) in pairs.
We will rely on the following observation.

\begin{observation} \label{obs:Ldisjoint}
If $xa$ is a supermaximal right-extension, then $L_{xa} \cap L_{xb}= \emptyset$ for any $b \neq a$. 
\end{observation}
\begin{proof}
Otherwise, $xa$ is not supermaximal, because there exists $c \in L_{xa} \cap L_{xb}$, for which $cxa \in \fact_w$ and $cxb\in \fact_w$, and thus $xa$ is a proper suffix of the right-extension $cxa$ of the right-maximal string $cx$. 
\end{proof}

Let us introduce a special tree of supermaximal extensions, which will be helpful to prove our result.

\begin{definition}The \emph{prefix supermaximal right-extension tree} (PSR tree) of $w$ is a compacted trie of all the supermaximal right-extensions of $w$. Since some of those strings can be prefixes of others, internal nodes of the trie may also represent strings, and their corresponding nodes are not compacted. Distinct leaves do not represent prefixes of each other.
\end{definition}

The following result is key to our proof.

\begin{lemma}\label{le:reverse_right_maximal}
Let $w$ be a string starting with a unique symbol $\dol$. If $x$ is a right-maximal substring of $w$, and $xa$ is a supermaximal right-extension of $x$ in $w$, then $x^R$ is a right-maximal substring of $w^R$.
\end{lemma}

\begin{proof}Because $xa$ is a right-extension in $w$, there exists $xb \in \fact_w$ with $b\neq a$. Any symbol preceding $xa$ cannot precede $xb$, otherwise, $xa$ would not be supermaximal. Since $w$ starts with the unique symbol $\dol$, if $x\neq \varepsilon$, then $xa$ and $xb$ have occurrences that are not prefixes of $w$, and hence, they are preceded by different symbols. It follows $x^R$ is right-maximal in $w^R$.
\end{proof}

\begin{lemma}\label{le:reverse_multiplicative}
It always holds that $\sre(w^R)/2\le \sre(w) \le 2\cdot \sre(w^R)$.
\end{lemma}

\begin{proof}
We define an injection $\lambda$ from the set of leaves and unary internal nodes (those with exactly one child) of the PSR tree $T(w)$ of $w$ into $\sreset(w^R)$. We will identify tree nodes with the string they represent. 

\begin{itemize}
\item \textbf{Leaf nodes:} If $xa$ is (the supermaximal right-extension represented by) a leaf node, we let $\lambda(xa)=ux^R\cdot l_{xa}$ where $l_{xa} \in L_{xa}(w)$ (which exists because $w$ starts with $\dol$, as discussed), and $u$ is chosen such that $ux^R\cdot l_{xa} \in \sreset(w^R)$. At least one such $u$ must exist because $x^R \cdot l_{xa}\in \fact_{w^R}$ and $x^R$ is right-maximal in $w^R$ by Lemma \ref{le:reverse_right_maximal}. 

\item \textbf{Unary internal nodes:} If $xa$ a unary internal node with child $yb = xazb$, consider a right-extension $xazc$ of $xaz$, where $c \neq b$ (one must exist because $xaz$ is right-maximal). We then let $\lambda(xa)=uz^Rax^R\cdot l_{yc} \in \sreset(w^R)$ where $l_{yc} \in L_{yc}(w) \subseteq L_{xa}(w)$. Note $L_{yc}$ is not empty because, as in the previous point, $xazc$ is a supermaximal right-extension, and $u$ exists because $z^R a x^R$ is right-maximal in $w^R$ by Lemma \ref{le:reverse_right_maximal}. 
\end{itemize}

We now prove that $\lambda$ is an injection. Consider two leaves $xa$ and $yb$ of $T(w)$:

\begin{enumerate}
\item If $x[i] \neq y[i]$ for some $i$, then $\lambda(xa) = u x^R \cdot l_{xa}$ and $\lambda(yb) = v y^R \cdot l_{yb}$, aligned from the right, differ at their position $\lambda(xa)[|\lambda(xa)|-i] \neq \lambda(yb)[|\lambda(yb)|-i]$, and therefore are different.
\item If $x$ is a proper prefix of $y$, then $y$ has to start with $xc$ for some $c\neq a$ (otherwise $xa$ would be internal). Because $xa$ is a supermaximal right-extension, the set of left-extensions $L_{xa}$ and $L_{yb}\subseteq L_{xc}$ are disjoint by Observation~\ref{obs:Ldisjoint}. Therefore, the selected supermaximal extensions are of the form $\lambda(xa) = ux^R\cdot l_{xa}$ and $\lambda(yb) = vy^R \cdot l_{xc}$ with $l_{xa} \neq l_{xc}$, thus, distinct.
\item Similarly, if $x = y$, the two supermaximal extensions $xa$ and $xb$ have disjoint sets of left-extensions $L_{xa}$ and $L_{xb}$, and hence $\lambda(xa)\neq\lambda(xb)$ because they end in different characters. 
\end{enumerate}

For a graphic depiction of these three cases see Figure \ref{fig:cases_reverse_multiplicative}. 

In the other cases, one of the two nodes is a (unary) internal node. Assume, without loss of generality, that $|x| \le |y|$. By the same argument of point 1 above, $\lambda(xa)$ cannot be equal to some $\lambda(yb)$ if $x$ is not a prefix of $y$. It follows that $xa$ is a unary internal PSR node that is an ancestor of $yb=xazb$:

\begin{enumerate}
\item[4.] Let $yb$ be the child of $xa$. Let $c\neq b$ be such that $xazc$ is a right-extension in $w$, and $\lambda(xa)=uz^Rax^R \cdot l_{yc}$ with $l_{yc} \in L_{yc}$. 
Note that $\lambda(xa)$ differs from $\lambda(yb)$ 
because $\lambda(yb)$ finishes with some $l_{yb} \in L_{yb}$ and, by Observation~\ref{obs:Ldisjoint}, $L_{yb} \cap L_{yc} = \emptyset$. 
\item[5.] 
If $yb$ is a farther descendant of $xa$, then $xaz$ has children by $c$ and by some $d \neq c$ where $xazd$ prefixes $y$. The string $xazd$ is a supermaximal right-extension because there is a node for it in the tree, hence, by Observation \ref{obs:Ldisjoint}, it holds $L_{xazc} \cap L_{xazd} = \emptyset$. As $L_{yb} \subseteq L_{xazd}$, it follows $L_{xazc} \cap L_{yb} = \emptyset$. Therefore, $\lambda(xa)$ and $\lambda(yb)$ end in different symbols.
\end{enumerate}

For a graphic depiction of these two cases see Figure \ref{fig:cases_reverse_multiplicative2}. 

We are then injecting $k$ nodes into $\sreset(w^R)$, where $k \ge \sre(w)/2$ (with equality when $T$ is a full binary tree). Thus, $\sre(w)/2 \le \sre(w^R)$. Since the reversal operation is an involution, we have similarly that $\sre(w^R)/2 \le \sre(w)$. 
\end{proof}

\begin{figure}[t]\center
\begin{tikzpicture}[scale=1]
\node at (-6.5,1) {1)};
\draw[decorate,decoration={brace,amplitude=5pt}] (-6,0.7) -- (-4,0.7)
  node[midway,above,yshift=4pt] {$x$};
\draw (-6.5,0) rectangle (-6,0.5) node[pos=.5,yshift=-0.75pt] {$l_{xa}$};
\draw (-6,0) rectangle (-5.5,0.5) node[pos=.5,yshift=-0.75pt] {$z$};
\draw[fill=gray!40]   (-5.5,0) rectangle (-5,0.5) node[pos=.5,yshift=-0.75pt] {$c$};
\draw (-5,0) rectangle (-4,0.5) node[pos=.5,yshift=-0.75pt] {$x'$};
\draw (-4,0) rectangle (-3.5,0.5) node[pos=.5,yshift=-0.75pt] {$a$};

\draw (-6.5,-0.75) rectangle (-6,-0.25) node[pos=.5,yshift=-0.75pt] {$l_{yb}$};
\draw (-6,-0.75) rectangle (-5.5,-0.25) node[pos=.5,yshift=-0.75pt] {$z$};
\draw[fill=gray!40]   (-5.5,-0.75) rectangle (-5,-0.25) node[pos=.5,yshift=-0.75pt] {$d$};
\draw (-5,-0.75) rectangle (-2.5,-0.25) node[pos=.5,yshift=-0.75pt] {$y'$};
\draw (-2.5,-0.75) rectangle (-2,-0.25) node[pos=.5,yshift=-0.75pt] {$b$};
\draw[decorate,decoration={brace,amplitude=5pt}] (-2.5,-0.9) --  (-6,-0.9) 
  node[midway,below,yshift=-4pt] {$y$};

\node at (-1.5,1) {2)};
\draw[fill=gray!40]   (-1.5,0) rectangle (-1,0.5) node[pos=.5,yshift=-0.75pt] {$l_{xa}$};
\draw (-1,0) rectangle (0,0.5) node[pos=.5,yshift=-0.75pt] {$x$};
\draw (0,0) rectangle (0.5,0.5) node[pos=.5,yshift=-0.75pt] {$a$};

\draw[fill=gray!40]   (-1.5,-0.75) rectangle (-1,0.-0.25) node[pos=.5,yshift=-0.75pt] {$l_{xc}$};
\draw (-1,-0.75) rectangle (0,-0.25) node[pos=.5,yshift=-0.75pt] {$x$};
\draw (0,-0.75) rectangle (0.5,-0.25) node[pos=.5,yshift=-0.75pt] {$c$};
\draw (0.5,-0.75) rectangle (1.5,-0.25) node[pos=.5,yshift=-0.75pt] {$y'$};
\draw (1.5,-0.75) rectangle (2,-0.25) node[pos=.5,yshift=-0.75pt] {$b$};
\draw[decorate,decoration={brace,amplitude=5pt}] (1.5,-0.9) --  (-1,-0.9) 
  node[midway,below,yshift=-4pt] {$y$};

\node at (2.5,1) {3)};
\draw[fill=gray!40]   (2.5,0) rectangle (3,0.5) node[pos=.5,yshift=-0.75pt] {$l_{xa}$};
\draw (3,0) rectangle (4,0.5) node[pos=.5,yshift=-0.75pt] {$x$};
\draw (4,0) rectangle (4.5,0.5) node[pos=.5,yshift=-0.75pt] {$a$};

\draw[fill=gray!40]   (2.5,-0.25) rectangle (3,-0.75) node[pos=.5,yshift=-0.75pt] {$l_{xb}$};
\draw (3,-0.25) rectangle (4,-0.75) node[pos=.5,yshift=-0.75pt] {$x$};
\draw (4,-0.25) rectangle (4.5,-0.75) node[pos=.5,yshift=-0.75pt] {$b$};
\draw[decorate,decoration={brace,amplitude=5pt}] (4,-0.9) --  (3,-0.9) 
  node[midway,below,yshift=-4pt] {$y$};

\end{tikzpicture}
\caption{Cases 1), 2), and 3) of Lemma \ref{le:reverse_multiplicative}, where $xa$ and $yb$ are leaves of the PSR tree. Note that the suffix $x^R\cdot l_{xa}$ and $x^R\cdot l_{yb}$ of $\lambda(xa)$ and $\lambda(yb)$ are not a suffix one of the other because the grayed symbols are different.
}\label{fig:cases_reverse_multiplicative}
\end{figure}

\begin{figure}[t]\center
\begin{tikzpicture}[scale=1]
\draw (0,0) rectangle (1,0.5) node[pos=.5,yshift=-0.75pt] {$x$};
\draw (1,0) rectangle (1.5,0.5) node[pos=.5,yshift=-0.75pt] {$a$};

\draw[thick,->] (1.75,0.25) to (2.75,0.25);
\draw[dashed,->] (1.75,0.25) to (2.75,-0.75);
\draw[thick,->] (6.75,0.25) to (7.75,0.25);
\draw (7.25,0.5) node {$\cdots$};

\draw[decorate,decoration={brace,amplitude=5pt}] (3.5,0.7) -- (6,0.7)
  node[midway,above,yshift=4pt] {$y$};
\draw[fill=gray!40]   (3,0) rectangle (3.5,0.5) node[pos=.5,yshift=-0.75pt] {$l_{yb}$};
\draw (3.5,0) rectangle (4.5,0.5) node[pos=.5,yshift=-0.75pt] {$x$};
\draw (4.5,0) rectangle (5,0.5) node[pos=.5,yshift=-0.75pt] {$a$};
\draw (5,0) rectangle (6,0.5) node[pos=.5,yshift=-0.75pt] {$z$};
\draw (6,0) rectangle (6.5,0.5) node[pos=.5,yshift=-0.75pt] {$b$};

\draw[fill=gray!40]  (3,-0.5) rectangle (3.5,-1) node[pos=.5,yshift=-0.75pt] {$l_{yc}$};
\draw (3.5,-0.5) rectangle (4.5,-1) node[pos=.5,yshift=-0.75pt] {$x$};
\draw (4.5,-0.5) rectangle (5,-1) node[pos=.5,yshift=-0.75pt] {$a$};
\draw (5,-0.5) rectangle (6,-1) node[pos=.5,yshift=-0.75pt] {$z$};
\draw (6,-0.5) rectangle (6.5,-1) node[pos=.5,yshift=-0.75pt] {$c$};

\draw[decorate,decoration={brace,amplitude=5pt}] (8.5,0.7) -- (11,0.7)
  node[midway,above,yshift=4pt] {$y'$};
\draw[fill=gray!40]   (8,0) rectangle (8.5,0.5) node[pos=.5,yshift=-0.75pt] {$l_{y'd}$};
\draw  (8.5,0) rectangle (9.5,0.5) node[pos=.5,yshift=-0.75pt] {$y$};
\draw  (9.5,0) rectangle (10,0.5) node[pos=.5,yshift=-0.75pt] {$b$};
\draw  (10,0) rectangle (11,0.5) node[pos=.5,yshift=-0.75pt] {$\cdots$};
\draw  (11,0) rectangle (11.5,0.5) node[pos=.5,yshift=-0.75pt] {$d$};

\end{tikzpicture}
\caption{Case 4) and 5) of Lemma \ref{le:reverse_multiplicative}, where $xa$ is an internal node of the PSR tree, $yb$ is the only child of $xa$, and $y'd$ is a farther descendant of $xa$ (if it exists). Note $xazc$ is a right-extension, but not supermaximal (otherwise, $xa$ would not be unary). In this case, $\lambda(xa)$ is suffixed by $y^R\cdot l_{yc}$, whereas $\lambda(yb)$ and $\lambda(y'd)$ are suffixed by $y^R\cdot l_{yb}$ and $y^R\cdot l_{y'd}$, respectively. Note how $yb$ and all its descendants yield $\lambda(\cdot)$ values differing from $\lambda(xa)$ in their last symbol.}\label{fig:cases_reverse_multiplicative2}
\end{figure}

\subsection{Sensitivity as a function of \texorpdfstring{$\delta$}{delta}}

Finally, we give upper bounds on the sensitivity of $\chi$ to string operations in terms of measure $\delta$.

\begin{lemma}Let $w\in \Sigma^*$  and $w' \in \mathtt{ins}(w) \cup \mathtt{del}(w) \cup \mathtt{sub}(w) \cup \mathtt{rot}(w) \cup \mathtt{rev}(w)$. For $\delta=\delta(w)$, it holds
\begin{align*}
&\chi(w') - \chi(w) = O\left(\delta\max\left(1, \log(n/(\delta\log \delta))\right)\log \delta\right) \text{ and } \\ 
&\chi(w')\hspace{3pt}/\hspace{4pt}\chi(w) = O\left(\max\left(1, \log(n/(\delta\log \delta))\right)\log \delta\right).
\end{align*}
\end{lemma}
\begin{proof}
It holds $\delta\leq\chi\leq 2\overline{r}$~\cite{suffixientarrays} and \sloppy$r=O(\delta\max(1, \log(n/(\delta\log \delta)))\log \delta)$~\cite{r_delta_bound}. Since the multiplicative sensitivity of $\delta$ to any of the string operations is $O(1)$~\cite{AFI23}, for any $w\in\Sigma^*$ it holds $\overline{r}(w)=r(w^R)=O(\delta\max(1, \log(n/(\delta\log \delta)))\log \delta)$.
The thesis follows by considering the worst case, that is, $\chi(w)=\Theta(\delta)$ and \sloppy$\chi(w')=\Theta(\delta\max(1, \log(n/(\delta\log \delta)))\log \delta)$.
\end{proof}

\section{Relating \texorpdfstring{$\chi$}{chi} to Other Repetitiveness Measures}

Previous work~\cite{DGLMP23} established that $\gamma \le \chi \le 2\overline{r}$ on every string family.
In this section we obtain the more natural upper bound $\chi \le 2r$, and that $\chi$ can be asymptotically strictly smaller, $\chi=o(r)$, on some string families (we actually prove $\chi=o(v)$). We also show that $\chi$ is incomparable with all the copy-paste measures except $s$ and $b$, in the sense that there are string families where $\chi$ is asymptotically strictly smaller than the others, and vice versa. We recall that $s = O(\chi)$ and that the relation between $b$ and $\chi$ remains unknown.

\subsection{Proving \texorpdfstring{$\chi \le 2r$}{chi is upper-bounded by r}}

We first prove that $\chi$ is asymptotically upper-bounded by the number $r$ of runs in the BWT of the sequence. 
As for the measure $\chi$, we assume that the BWT is computed after appending the $\dol$ symbol. 
Observe that the precise bound of the following lemma has been recently proved to be tight for binary strings, while on $\sigma$-ary alphabets, with $\sigma>2$, there exists a family of strings that approaches it as $\sigma$ grows~\cite{chi2r}.

\begin{lemma}\label{lem:chi_bounded_by_r}
It always holds that $\chi \le 2r$.
\end{lemma}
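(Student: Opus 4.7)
\noindent\textit{Proof plan.}
The plan is to show $\sre(w\dol) \le 2r$, which yields $\chi(w) \le 2r$ via $\chi(w) = \sre(w\dol)$. Let $s = w\dol$, and for each substring $v$ of $s$ let $[l_v, r_v]$ denote the interval of rotations in the BWT matrix starting with $v$. I would inject the super-maximal extensions of $s$ into a pool of $2r$ ``BWT boundary slots'': the $r$ positions that start a BWT run together with the $r$ positions that end a BWT run.

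First I would prove a structural consequence of super-maximality: for a super-maximal $xa$, any character $b$ preceding some occurrence of $xa$ in $s$ must satisfy that $bx$ is followed only by $a$ (otherwise $bxa$ would be a right-extension with $xa$ as a proper suffix). In the BWT matrix, this says the set of BWT entries inside $[l_{xa}, r_{xa}]$ is disjoint from the set of BWT entries inside any sibling sub-interval $[l_{xc}, r_{xc}]$, $c \ne a$. A direct consequence is that $l_{xa}$ is a BWT run-start whenever $xa$ is not the lex-smallest right-extension of $x$, and symmetrically $r_{xa}$ is a BWT run-end whenever $xa$ is not the lex-largest.

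Next I would define an assignment $\Phi$ sending each super-maximal $xa$ to one such slot, exploiting the fact that, because $x$ has at least two right-extensions, no $xa$ is simultaneously lex-smallest and lex-largest: if $xa$ is not lex-smallest, set $\Phi(xa)$ to the run-start slot at $l_{xa}$; otherwise, set $\Phi(xa)$ to the run-end slot at $r_{xa}$. By the previous paragraph every image is a valid slot.

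The hard part will be proving $\Phi$ injective. Two super-maximal extensions with disjoint BWT intervals map to different positions, and two on different branches of $\Phi$ use different slot types; the delicate case is two distinct super-maximal extensions $xa$ and $x'a'$ that are nested under the prefix order (say $xa$ is a proper prefix of $x'a'$) and assigned on the same branch. For the start branch, the chain $l_{xa} \le l_{x'} \le l_{x'a'}$ on left endpoints forces $l_{x'} = l_{x'a'}$ when the outer endpoints coincide, making $a'$ the lex-smallest right-extension of $x'$ and contradicting the start branch taken by $x'a'$. For the end branch, the chain $r_{x'a'} \le r_{x'} \le r_{xa}$ on right endpoints similarly forces $a'$ to also be the lex-largest right-extension of $x'$, which combined with its lex-smallest status on the end branch would leave $x'$ with only one right-extension, contradicting its right-maximality. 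This yields $\sre(w\dol) \le 2r$.
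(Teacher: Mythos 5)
Your plan is correct and rests on the same key structural fact as the paper's proof: if $xa$ is super-maximal, then the characters (cyclically) preceding occurrences of $xa$ are disjoint from those preceding $xc$ for every $c\neq a$ --- in the paper this is the claim that the BWT symbols in the block of rotations prefixed by $xa$ are disjoint from those in any sibling block. Where you differ is in the accounting. The paper groups the super-maximal extensions $xa_1<\cdots<xa_t$ of a fixed right-maximal $x$, charges them to the $t-1$ run starts forced between consecutive blocks (worst ratio $2{:}1$, attained at $t=2$), and then needs a separate argument that the run starts charged by $x$ and by a proper prefix $y$ of $x$ are distinct. You instead charge each super-maximal extension individually to one of $2r$ slots ($r$ run starts plus $r$ run ends), using lex-extremality among the siblings to decide which endpoint, and prove injectivity by the interval-nesting argument; this neatly replaces the paper's cross-level disjointness argument and is arguably cleaner, at the cost of invoking run ends as well as run starts. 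Both routes give $\chi=\sre(w\dol)\le 2r$. One detail to spell out when writing yours up: your structural claim speaks of a character $b$ ``preceding an occurrence of $xa$,'' but in the BWT matrix this is a cyclic predecessor, so you must argue (as the paper does, using that $\dol$ occurs exactly once) that $bxa$ and $bxc$ are genuine linear substrings of $w\dol$ before invoking the super-maximality of $xa$.
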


\begin{proof}
    Let $x_i$ denotes the $i$th rotation of $w\dol$ in lexicographic order, for each $i\in [1\dd|w|+1]$, and let $u_i$ be the longest common prefix between the rotations $x_i,x_{i+1}$, for each $i\in [1\dd|w|]$ (note this implies that $u_i$ is right-maximal).
    We further define $s:[1\dd n+1]\rightarrow[0\dd n]$ as $s(i) = j$ if $x_i = w[j+1\dd |w|]\dol w[1\dd j]$, that is, the number of cyclic shifts to the right required to transform $x_i$ into $w\dol$.\footnote{The function $s$ mimics the well-known Suffix Array~\cite{ManberM93}, here omitted for simplicity of exposition.} As the symbol $\dol$ occurs only once in $w\dol$, the function $s$ is bijective.
  
    Note that each right-extension of $w\dol$ can be written as $u_ic$, for some $i \in [1\dd |w|]$ and $c\in \Sigma$.
    Consider now the set \[S = \bigcup_{i\in[1\dd |w|]}\{s(i)+|u_i|+1,s(i+1)+|u_i|+1\},\] that is, the set of positions where the occurrences of the right-extensions $u_ic_1$ and $u_ic_2$ end in $w\dol$, where $u_ic_1$ and $u_ic_2$ are the prefixes of $x_i$ and $x_{i+1}$, respectively, for some $c_1,c_2\in\Sigma$ such that $c_1<c_2$. It follows by construction that the set $S$ is a suffixient set of $w\dol$.
    
    We now show that $|S|\leq2r$.
    Let us factorize each pair of consecutive rotations in the BWT-matrix as $x_i = u_iv_ic_i$ and $x_{i+1} = u_iv'_ic_{i+1}$. Observe that $v_i,v'_i\neq \varepsilon$~\cite[Corollary~8]{FRSU23}, $v_i[1] \neq v'_i[1]$, and $c_i = \bwt(w\dol)[i]$ for all $i\in [1\dd |w|+1]$.
    A well-known property of the BWT-matrix is that if $c_i = c_{i+1} = c\in \Sigma$, then there exists $j\in [1\dd |w|]$ such that $x_j = c u_i v_i$ and $x_{j+1} = c u_i v'_i$~\cite{BW94}.
    As a consequence, one has that $s(j)+|u_j|+1 = (s(i)-1) + (|u_i|+1) + 1 = s(i) + |u_i| + 1$ and $s(j+1)+|u_j|+1 = (s(i+1)-1) + (|u_i|+1) + 1 = s(i+1) + |u_i| + 1$, and the procedure can be reiterated as long as $x_j$ and $x_{j+1}$ end with the same symbol. See Figure \ref{fig:bwt} for an illustration.
    It follows that the same set can be written as \[S=\{s(i)+|u_i|+1,s(i+1)+|u_i|+1\mid i\in[1\dd |w|] \wedge\bwt[i]\neq\bwt[i+1]\},\] that is, the size of $S$ is at most twice the number of equal-letter runs in $\bwt(w\dol)$, and the thesis follows.
\end{proof}

\begin{figure}\center
\begin{tikzpicture}[scale=1]

% Outer rectangle
\draw[-] (0,2) to (0,9);
\draw[-] (9.5,2) to (9.5,9);
\draw[-] (10,2) to (10,9);
\node at (9.75,9.5) {$\mathtt{BWT}(w\dol)$};

\node at (-1,8.25) {$x_j$};
\node at (-1,7.70) {$x_{j+1}$};
\draw (0,8) rectangle (3.5,8.5) node[pos=.5] {$u$};
\draw (0,7.5) rectangle (3.5,8) node[pos=.5] {$u$};
\draw (3.5,8) rectangle (4,8.5) node[pos=.5] {$a$};
\draw (3.5,7.5) rectangle (4,8) node[pos=.5] {$b$};
\draw (9.5,8) rectangle (10,8.5) node[pos=.5] {$c$};
\draw (9.5,7.5) rectangle (10,8) node[pos=.5] {$c$};
\draw[->] (-1.45,8) --(-1.6,8) --(-1.6,5.5) --(-1.45,5.5);

\node at (9.75,6.75) {$\vdots$};

\node at (-1,5.75) {$x_{j'}$};
\node at (-1,5.20) {$x_{j'+1}$};
\draw (0.5,5.5) rectangle (4,6) node[pos=.5] {$u$};
\draw (0.5,5) rectangle (4,5.5) node[pos=.5] {$u$};
\draw (4,5.5) rectangle (4.5,6) node[pos=.5] {$a$};
\draw (4,5) rectangle (4.5,5.5) node[pos=.5] {$b$};
\draw (0,5.5) rectangle (0.5,6) node[pos=.5] {$c$};
\draw (0,5) rectangle (0.5,5.5) node[pos=.5] {$c$};
\draw (9.5,5.5) rectangle (10,6) node[pos=.5] {$c$};
\draw (9.5,5) rectangle (10,5.5) node[pos=.5] {$c$};
\draw[decorate,decoration={brace,amplitude=10pt}]  (0,6) -- (4,6) node[midway,above,yshift=10pt] {$u'$};
\draw[->] (-1.75,5.5) --(-1.9,5.5) --(-1.9,3) --(-1.45,3);

\node at (9.75,4.25) {$\vdots$};

\node at (-1,3.25) {$x_{j''}$};
\node at (-1,2.70) {$x_{j''+1}$};
\draw (1,3) rectangle (4.5,3.5) node[pos=.5] {$u$};
\draw (1,2.5) rectangle (4.5,3) node[pos=.5] {$u$};
\draw (4.5,3) rectangle (5,3.5) node[pos=.5] {$a$};
\draw (4.5,2.5) rectangle (5,3) node[pos=.5] {$b$};
\draw (0,3) rectangle (0.5,3.5) node[pos=.5] {$c$};
\draw (0,2.5) rectangle (0.5,3) node[pos=.5] {$c$};
\draw (0.5,3) rectangle (1,3.5) node[pos=.5] {$c$};
\draw (0.5,2.5) rectangle (1,3) node[pos=.5] {$c$};
\draw (9.5,3) rectangle (10,3.5) node[pos=.5] {$d$};
\draw (9.5,2.5) rectangle (10,3) node[pos=.5] {$e$};
\draw[decorate,decoration={brace,amplitude=10pt}]  (0,3.5) -- (5,3.5) node[midway,above,yshift=10pt] {$u''$};

\end{tikzpicture}
\caption{Depiction of the proof of Lemma \ref{lem:chi_bounded_by_r}.}\label{fig:bwt}
\end{figure}

In some fields like Combinatorics on Words, the BWT is used without appending the dollar symbol at the end of the string.  We call this variant the  \emph{circular BWT} and denote its number of runs by $r_c$. This small change can have a great impact in the value of the measure:  $r$ and $r_c$ can differ by a $\Omega(\log n)$ factor \cite{GILRSU25}.   We show that $\chi$ is also smaller than the variant $r_c$.

\begin{lemma}It always holds that $\chi \le 2r_c+2$.
\end{lemma}

\begin{proof}
Consider this time the BWT-matrix of $w$ instead of $w\dol$. Observe that all the right-extensions of $w\dol$ still appear as $u_ia$ for some $u_i=lcp(x_i,x_{i+1})$ and symbol $a \in \Sigma$, with some exceptions:

\begin{enumerate}
\item \label{case1} the right-extensions ending with $\dol$, and
\item \label{case2}the right-extensions $za$ ending with some symbol $a$ such that only $za$ and $z\dol$ are right-extensions.
\end{enumerate}

The right-extensions in Case~\ref{case1} and Case~\ref{case2} form two suffix chains, and hence,  the set  \[S = \bigcup_{i\in[1\dd |w|-1]}\{s(i)+|u_i|+1,s(i+1)+|u_i|+1\},\] is missing positions for at most 2 supermaximal right-extensions. Thus, $\chi \le 2r_c+2$.
\end{proof}

\subsection{A family with \texorpdfstring{$\chi=o(v)$ (and thus $o(r)$)}{chi smaller than v}}

We will now show that $\chi = o(v)$ on the so-called Fibonacci words, which also implies $\chi=o(r)$ in that string family because $v=O(r)$~\cite{NOP21}. Combined with Lemma~\ref{lem:chi_bounded_by_r}, this implies that $\chi$ is a strictly smaller measure than $r$. In contrast, $\chi$ is incomparable with $v$, as we show later. On our way, we obtain some relevant byproducts about the structure of suffixient sets on Fibonacci, and more generally, episturmian words.

\begin{definition}[\cite{epistand_episturm,episturmian_survey}]An infinite string $\infw$ is \emph{episturmian} if it has at most one right-maximal substring of each length and its set of substrings is closed under reversal, that is, $\fact_\infw = \fact_\infw^R$. It is \emph{standard episturmian} (or \emph{epistandard}) if, in addition, all the right-maximal substrings of $\infw$ are of the form $\infw[1\dd i]^R$ with $i \ge 0$, that is, they are the reverse of some prefix of $\infw$.
\end{definition}

\begin{lemma}
Let $\textbf{w}\in \Sigma^\omega$ be an episturmian word with $|\Sigma| = \sigma \ge 2$. Then, \sloppy $\sre(\textbf{w}[i\dd j]) \le \sigma$ for any $i,j\ge 0$.
\end{lemma}

\begin{proof}
Let $\infw$ be an epistandard word.  The right-extensions $x_1,x_2 , \dots $ ending with $a\in \Sigma$ form a \emph{suffix-chain} where each $x_i$ is a suffix of $x_{i+1}$. This is because the right-maximal substrings are the reverses of the prefixes of the word, so they are all suffixes of the longest one. If we add the same letter to a subset of them to obtain right-extensions, they still form a suffix-chain. There is one of those suffix-chains for each character $a \in \Sigma$.

Let $\infw$ be episturmian but not necessarily epistandard. There exists some epistandard word $\textbf{s}$ with the same set of substrings, i.e., $\fact_\infw = \fact_{\textbf{s}}$~\cite{epistand_episturm}. Therefore, for any episturmian word $\infw$, there exist exactly $\sigma$ suffix-chains of right-extensions.

 When considering substrings of $\infw$, the supermaximal right-extension in $\infw[i\dd j]$ ending with $a \in \Sigma$ is the longest right-extension of $\textbf{w}$ ending with $a$ that remains a right-extension in $\infw[i\dd j]$. It follows that for any substring $\infw[i\dd j]$ of any episturmian word $\infw$, it holds $\sre(\infw[i\dd j]) \le \sigma$.  \end{proof}

Combining this result with Corollary~\ref{cor:sre_chi}, we obtain the following bound.

\begin{corollary} \label{cor:morphic}
For any episturmian word $\infw \in \Sigma^\omega$ it holds $\chi(\infw[i\dd j]) \le \sigma + 2$. \end{corollary}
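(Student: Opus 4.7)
The plan is to simply chain the previous lemma with the corollary that relates $\chi$ and $\sre$. The previous lemma gives $\sre(\infw[i\dd j]) \le \sigma$ for any episturmian word $\infw$ and any indices $i,j$. The corollary stated just before Lemma \ref{le:chi_non_monotone} (obtained by taking $c = \dol$ in Lemma~\ref{lem:sre_append_bounded}) gives $\chi(w) \le \sre(w) + 2$ for any $w$ with $\dol \not\in \fact_w$.

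First I would fix an arbitrary substring $w = \infw[i\dd j]$ of the episturmian word $\infw \in \Sigma^\omega$. Since $\dol$ is the special terminator not belonging to $\Sigma$, we have $\dol \notin \fact_w$, so the corollary relating $\chi$ to $\sre$ applies, giving $\chi(w) \le \sre(w) + 2$. Combining this with the bound $\sre(w) \le \sigma$ from the preceding lemma yields $\chi(\infw[i\dd j]) \le \sigma + 2$, as desired.

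There is no real obstacle here: the statement is a direct consequence of the two results just proved, and no further structural analysis of episturmian words is needed beyond what was already established in the previous lemma. The only thing worth double-checking is the hypothesis of the corollary relating $\chi$ to $\sre$, namely that $\dol$ does not occur in $w$, which holds by the standing assumption that the episturmian word is over the alphabet $\Sigma$ not containing $\dol$.
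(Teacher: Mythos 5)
Your proposal is correct and matches the paper's own argument exactly: the paper derives this corollary by combining the preceding lemma ($\sre(\infw[i\dd j]) \le \sigma$) with the bound $\chi(w) \le \sre(w)+2$ obtained from Lemma~\ref{lem:sre_append_bounded} with $c=\dol$. Your check that $\dol \notin \fact_w$ is the right hypothesis to verify, and it holds as you say.
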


The next lemma precisely characterizes the suffixient sets of Fibonacci words, a particular case of epistandard words that will be useful to relate $\chi$ with $v$.

\begin{definition}
Let $F_1 = \b$, $F_2 = \a$, and $F_k = F_{k-1}F_{k-2}$ for $k \ge 3$ be the Fibonacci family of strings. Their lengths, $f_k = |F_k|$, form the Fibonacci sequence.  \end{definition}

\begin{lemma} \label{lem:fibo}
Every Fibonacci word $F_k\dol$ has a suffixient set of size at most 4. For $k \ge 6$, the only smallest suffixient sets for $F_k\dol$ are $\{ f_k+1, f_k-1, f_{k-1}-1, p \}$, where $p \in \{f_{k-2}+1,2f_{k-2}+1\}$.
\end{lemma}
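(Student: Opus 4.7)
The plan is to identify the super-maximal right-extensions of $F_k\dol$ exactly and then read off their positions, using the combinatorial structure of Fibonacci. The upper bound $\chi(F_k) \le 4$ is immediate from Corollary~\ref{cor:morphic}, since Fibonacci words are episturmian over an alphabet of size $\sigma = 2$. The bulk of the work is the characterization for $k \ge 6$.

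I would first classify the right-maximal factors of $F_k\dol$. The ``standard'' right-maximal factors are those inherited from the Sturmian structure of the infinite Fibonacci word $\infw$: there is a unique right-maximal factor $R_i := \infw[1\dd i]^R$ of each length, and the ones present in $F_k$ are exactly $R_0,\dots,R^*$ with $R^* := R_{f_{k-1}-2}$ the longest. In addition, appending $\dol$ makes some suffixes of $F_k$ right-maximal: using the facts that the longest border of $F_k$ is $F_{k-2}$ and that strictly longer suffixes of $F_k$ occur only at the end, these newly right-maximal suffixes are exactly the suffixes of $F_k$ of length at most $f_{k-2}$, with $F_{k-2}$ the longest.

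Next I would pin down the super-maximal right-extensions. The standard right-maximal factors form a chain under the suffix relation (Sturmian), so the extensions $R_i\a$ and $R_i\b$ form two chains, topped by the super-maximal $R^*\a$ and $R^*\b$. For the newly right-maximal factors, a short combinatorial verification using the decomposition $F_k = F_{k-2}F_{k-3}F_{k-2}$ shows that each non-suffix occurrence of $F_{k-2}$ (and hence of each of its suffixes) is followed by $\a$, so only $\a$- and $\dol$-extensions arise among them, topped by $F_{k-2}\a$ and $F_{k-2}\dol$. Whether $F_{k-2}\a$ is a further super-maximal extension depends on whether it is a suffix of $R^*\a$: since the length-$f_{k-2}$ suffix of $R^*$ equals $R_{f_{k-2}} = F_{k-2}^R$, this happens iff $F_{k-2}$ is a palindrome, which among Fibonacci words $F_j$ ($j \ge 3$) holds only for $F_4 = \aba$, i.e.\ $k = 6$ (where also $R^* = F_{k-2}$). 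Hence there are exactly 3 super-maximal extensions for $k = 6$ and exactly 4 for $k \ge 7$.

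Finally, I would compute positions. For $k \ge 7$, $R^*$ has exactly two occurrences in $F_k$, at positions $1$ and $f_{k-2}+1$, followed by $\a$ and $\b$ in an order depending on the parity of $k$; this forces $\{pos(R^*\a), pos(R^*\b)\} = \{f_{k-1}-1, f_k-1\}$. By the same decomposition, $F_{k-2}$ has exactly three occurrences in $F_k$, at $1$, $f_{k-2}+1$, and $f_{k-1}+1$, the first two being non-suffix and both followed by $\a$; this yields $pos(F_{k-2}\a) \in \{f_{k-2}+1, 2f_{k-2}+1\}$ and $pos(F_{k-2}\dol) = f_k+1$. Combining, the smallest suffixient sets of $F_k\dol$ are precisely $\{f_k+1, f_k-1, f_{k-1}-1, p\}$ with $p \in \{f_{k-2}+1, 2f_{k-2}+1\}$, collapsing to a three-element set when $k = 6$. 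The main obstacle will be cleanly proving the Fibonacci-specific facts on occurrence multiplicities and followers of $F_{k-2}$ and $R^*$ in $F_k$, which I would carry out by induction on $k$ using the identities $F_k = F_{k-1}F_{k-2} = F_{k-2}F_{k-3}F_{k-2}$.
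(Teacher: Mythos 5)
Your plan lands on exactly the same four super-maximal extensions as the paper: your $R^{*}=\infw[1\dd f_{k-1}-2]^R$ is the paper's central word $H_{k-1}$ (a palindrome), and the other two are the extensions $F_{k-2}\a$ and $F_{k-2}\dol$ of the border; the positions you read off are the same. The organization differs: the paper exhibits the four extensions directly from the factorizations $F_k=F_{k-1}F_{k-2}=F_{k-2}F'_{k-1}$ and gets completeness from the bound $\chi\le 4$, whereas you classify \emph{all} right-maximal factors of $F_k\dol$ (reversed prefixes plus border suffixes), organize the right-extensions into suffix chains, and take their tops. A genuine plus of your version is the explicit treatment of $k=6$ via the palindromicity of $F_4$: there the two chains' tops coincide ($R^{*}=F_4$), the set degenerates to three elements, and your argument covers this cleanly, while the paper's occurrence count for $H_{k-1}$ is actually off at $k=6$ ($\a\b\a$ occurs three times in $F_6$, not two).

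Two points need repair. First, the claim that each non-suffix occurrence of $F_{k-2}$ ``and hence of each of its suffixes'' is followed by $\a$ is false as stated: short suffixes of $F_{k-2}$ (e.g.\ $\a$ itself) occur elsewhere followed by both letters. What you actually need is that a suffix $s$ of $F_{k-2}$ that is \emph{not} already right-maximal in $F_k$ has a unique follower in $F_k$, and since $s$ occurs as a suffix of a non-suffix occurrence of $F_{k-2}$, that follower is $\a$; the chain topped by $F_{k-2}\a$ then consists only of these uniquely-followed suffixes, and the doubly-followed ones are absorbed into the $R_i$ chains. Second, the occurrence-multiplicity facts (exactly two occurrences of $R^{*}$ and exactly three of $F_{k-2}$ in $F_k$ for $k\ge 7$, with the stated preceding and following letters) are where the paper spends most of its effort, using that a primitive word cannot equal a nontrivial rotation of itself; your proposal defers them to ``a short combinatorial verification,'' which understates the work, though the induction you sketch via $F_k=F_{k-1}F_{k-2}=F_{k-2}F_{k-3}F_{k-2}$ would carry it through.
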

\begin{proof}
The upper bound of $4$ stems directly from Corollary~\ref{cor:morphic}, because the infinite Fibonacci word is binary epistandard.
For $k \ge 3$, there exist strings $H_k$ such that $F_k = F_{k-1}F_{k-2} = H_kcd$ and $F_{k-2}F_{k-1}=H_kdc$, for $cd = \a\b$ or $cd=\b\a$ depending on the parity of $k$~\cite{deLuca81}.
Let us call $F'_k = H_k dc = F_{k-2}F_{k-1}$, that is, $F_k$ with the last two letters exchanged;
thus $F_k = F_{k-1}F_{k-2} = F_{k-2}F_{k-1}'$.

Note that $F_{k-1} = H_{k-1} dc$ prefixes $F_k$. On the other hand,
we can write $F_k = F_{k-1}F_{k-2} = F_{k-2}F_{k-3}F_{k-2} = F_{k-2}F_{k-1}'=F_{k-2}H_{k-1}cd$.
Therefore, string $H_{k-1}$ is right-maximal in $F_k$. Its extensions, $H_{k-1}d$ and $H_{k-1}c$, are supermaximal because there are no other occurrences of $H_{k-1}$ in $F_k$: (i) $H_{k-1}$ cannot occur starting at positions $f_{k-2}+2$ or $f_{k-2}+3$ because it occurs at $f_{k-2}+1$, so $H_{k-1}$ should match itself with an offset of 1 or 2, which is impossible because it prefixes $F_{k-1}$ and all $F_{k-1}$ for $k-1 \ge 5$ start with $\str{abaab}$; (ii) $H_{k-1}$ cannot occur starting at positions $2$ to $f_{k-2}$ because its prefix $F_{k-2}$ should occur inside the prefix $F_{k-2}F_{k-2}$ of $F_k = F_{k-2}F_{k-1}' = F_{k-2}F_{k-2}F_{k-3}'$, and so $F_{k-2}$ should equal a rotation of itself, which is impossible~\cite[Cor.~3.2]{Dro95}. The two positions following $H_{k-1}$, $f_{k-1}-1$ and $f_k-1$, then appear in any suffixient set.

On the other hand, $F_{k-2}$ is followed by $\dol$ in $F_k\dol$, and it also prefixes $F_k = F_{k-2}F_{k-1}'$, therefore $F_{k-2}$ is right-maximal. The first occurrence is preceded by $F_{k-1}$, and hence by $c$, and the second by no symbol. $F_{k-2}$ also occurs in $F_k$ at position $f_{k-2}+1$, as seen above, preceded by $F_{k-2}$ and thus by $d$. There are no other occurrences of $F_{k-2}$ in $F_k$ because (i) it cannot occur starting at positions $2$ to $f_{k-2}$ by the same reason as point (ii) of the  previous paragraph; (ii) it cannot appear starting at positions $f_{k-2}+2$ to $f_{k-1}-2$ because $F_k = F_{k-2}F_{k-2}F_{k-3}'$ and $F_{k-3}'[1,f_{k-3}-2] = F_{k-3}[1,f_{k-3}-2] = F_{k-2}[1..f_{k-3}-2]$, thus such an occurrence would also match a rotation of $F_{k-2}$, which is impossible as noted above; (iii) it cannot appear starting at positions $f_{k-1}-1$ or $f_{k-1}$ because, since it matches at position $f_{k-1}+1$, $F_{k-2}$ would match itself with an offset of 1 or 2, which is impossible as noted in point (i) of the previous paragraph. The right-extensions of $F_{k-2}$ are then supermaximal. The one followed by $\dol$ occurs ending at position $f_k+1$. The other two are followed by $\a$ because they are followed by $F_{k-2}$ and by $F_{k-3}'$ and all $F_k$ for $k \ge 2$ start with $\a$. We can then choose either ending position for a suffixient set, $f_{k-2}+1$ or $2f_{k-2}+1$.

\end{proof}

\begin{corollary} \label{cor:v}
There exist string families where $\chi = o(v)$.
\end{corollary}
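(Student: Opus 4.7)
The plan is to derive the corollary as an almost immediate consequence of Lemma~\ref{lem:fibo} together with a standard lower bound on $v$ for Fibonacci words. Lemma~\ref{lem:fibo} already does all the combinatorial work on $\chi$: it gives $\chi(F_k) \le 4$ for every $k$, hence $\chi$ is bounded by an absolute constant on the entire Fibonacci family. So the whole task reduces to showing that, on the same family, $v(F_k)$ is unbounded, indeed $v(F_k) = \omega(1)$.

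For the lower bound on $v$, I would invoke the chain $z_{no} = O(v)$ established in~\cite{NOP21} (which is one of the arrows in Fig.~\ref{fig:measures}), together with the classical fact that for the Fibonacci words the Lempel--Ziv parse satisfies $z(F_k) = \Theta(\log f_k)$, and likewise $z_{no}(F_k) = \Theta(\log f_k)$: intuitively, the recurrence $F_k = F_{k-1} F_{k-2}$ forces the greedy parse to introduce a new phrase at every ``Fibonacci boundary'', producing $\Theta(k)$ phrases on a string of length $f_k$. Combining the two gives $v(F_k) = \Omega(\log f_k) = \omega(1)$, and together with $\chi(F_k) = O(1)$ this yields $\chi(F_k) = o(v(F_k))$, proving the corollary.

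The only mildly delicate point, and thus the main obstacle, is making sure we cite (or equivalently re-derive) the logarithmic lower bound on $v$ for Fibonacci words rather than just on $z$ or $\gamma$. Using $\gamma$ would not work, since $\gamma(F_k) = 2$; what is needed is a measure that is already known to both lower-bound $v$ and grow with $k$ on the Fibonacci family, and $z_{no}$ (via~\cite{NOP21}) is the natural choice. Once that lower bound is in place, the argument is a one-line comparison with Lemma~\ref{lem:fibo}, and no further case analysis is required.
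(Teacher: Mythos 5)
Your first half matches the paper exactly: Lemma~\ref{lem:fibo} gives $\chi(F_k)\le 4=O(1)$ on the Fibonacci family, and the whole point is then to lower-bound $v$ on that same family. The gap is in how you obtain that lower bound. The relation $z_{no}=O(v)$ that you invoke is not one of the established arrows and is not proved in \cite{NOP21}: in the lexicographic parse, a phrase's source is only required to start at a position whose suffix is lexicographically \emph{smaller}, and that position may lie to the \emph{right} of the phrase. Hence the lexicographic parse is a valid bidirectional macro scheme but not a left-to-right parse, and the only containment that follows from optimality arguments is $b\le v$ --- neither $z\le v$ nor $z_{no}\le v$ is known. Falling back to $b$ does not rescue the argument either, since on Fibonacci words the only known lower bound on $b$ is via $\gamma=O(1)$, so $b=\omega(1)$ is not available there (this is precisely why separating $\gamma$ from $b$ required other word families).

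What you actually need --- and what the paper uses --- is the direct result that $v=\Omega(\log n)$ on the \emph{odd} Fibonacci words, proved as Theorem~28 of \cite{NOP21}. The restriction to odd indices still yields an infinite family, so combined with $\chi=O(1)$ from Lemma~\ref{lem:fibo} it gives $\chi=o(v)$ immediately. Your instinct that one must lower-bound $v$ itself (and that $\gamma$ is useless here) is right; the flaw is routing the bound through $z_{no}$, for which no comparison with $v$ exists.
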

\begin{proof}
It follows from Lemma~\ref{lem:fibo} and the fact that $v=\Omega(\log n)$ on the odd Fibonacci words~\cite[Thm.~28]{NOP21}. 
\end{proof}

\subsection{Uncomparability of \texorpdfstring{$\chi$}{chi} with copy-paste measures}

Finally, we show that $\chi$ is incomparable with most copy-paste measures. This follows  from $\chi$ being $\Theta(n)$ on de Bruijn sequences and $O(1)$ on Fibonacci strings.

\begin{definition}A  \emph{binary de Bruijn sequence of order $k>0$}~\cite{de_bruijn} contains every binary string in $\{\a,\b\}^k$ as a substring exactly once. The length of this sequence is $n = 2^k+(k-1)$. The set of binary de Bruijn sequences of order $k$ is $\db(k)$.\end{definition}

\begin{lemma}\label{lem:chi_de_bruijn}
It holds $\sre(w) = 2^k = \Omega(n)$ for any $w[1 \dd n] \in \db(k)$.
\end{lemma}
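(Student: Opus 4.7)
\medskip

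\noindent\textbf{Proof plan.} The plan is to show that the set of super-maximal right-extensions of a de Bruijn sequence of order $k$ coincides exactly with $\{\a,\b\}^k$. Since $|\{\a,\b\}^k|=2^k$ and $n=2^k+(k-1)$, this gives $\sre(w)=2^k=\Omega(n)$.

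First I would establish that $\{\a,\b\}^k \subseteq E_r(w)$. The key observation is that every length-$(k-1)$ binary string occurs in $w$ (as a prefix of some length-$k$ window, which exists by the de Bruijn property) and is right-maximal: for any $x\in\{\a,\b\}^{k-1}$, both $x\a$ and $x\b$ are strings of length $k$, hence substrings of $w$. So all $2^k$ binary strings of length $k$ are right-extensions of $w$.

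Next I would show that no substring of $w$ of length $\ge k$ is right-maximal. This rests on the uniqueness-of-occurrence property of de Bruijn sequences: every $y\in\{\a,\b\}^k$ appears exactly once as a substring of $w$. Hence any substring $z$ with $|z|\ge k$ has at most one occurrence in $w$ (its length-$k$ prefix pins its unique starting position), so $z$ has at most one right-extension in $w$ and cannot be right-maximal. Consequently, every right-extension in $E_r(w)$ has length at most $k$.

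From these two facts I would conclude. Each string in $\{\a,\b\}^k$ is a right-extension (by the first step) and, being of maximum possible length $k$, cannot be a proper suffix of any other right-extension (strings of equal length are incomparable for the suffix relation unless equal, and no longer right-extensions exist by the second step). Hence all $2^k$ elements of $\{\a,\b\}^k$ belong to $\mathcal{S}_r(w)$. Conversely, any right-extension $x$ of length $<k$ is a suffix of some $x'\in\{\a,\b\}^k\subseteq E_r(w)$ with $x'\neq x$, and therefore cannot be super-maximal. Thus $\mathcal{S}_r(w)=\{\a,\b\}^k$ and $\sre(w)=2^k$.

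The main technical step is the uniqueness argument in the second paragraph: once one formalizes that a length-$k$ window of $w$ determines its starting position uniquely, everything else follows easily from counting. I do not expect real obstacles beyond being careful that the right-maximality definition requires at least \emph{two} distinct right-extensions, so unique-occurrence substrings of length $\ge k$ are immediately excluded.
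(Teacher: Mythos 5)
Your proposal is correct and follows essentially the same route as the paper: both use the de Bruijn property to show that all $2^k$ length-$k$ binary strings are right-extensions, and the exactly-once occurrence of length-$k$ substrings to rule out right-maximal strings of length $\ge k$, so that the length-$k$ extensions are precisely the super-maximal ones. The only cosmetic difference is that you deduce super-maximality from the absence of longer right-extensions, while the paper additionally gives a direct contradiction (a two-fold occurrence of some $dx$ of length $k$); the two arguments rest on the same fact.
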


\begin{proof}

Let $w[1\dd n]$ be a binary de Bruijn string of order $k$. By definition, $w$ contains every binary string of length $k$ as a substring exactly once. As all the possible pairs of strings $x\a$ and $x\b$ of length $k$ appear in  $w$, it follows that all the strings in $\mathcal{F}_w(k)$ are right-extensions. 
Moreover, each $x\a$ and $x\b$ of length $k$ are supermaximal right-extensions: otherwise, there would exist some $c\in\{\a,\b\}$ such that $cx\a$ and $cx\b$ are both substrings of $w$, which raises a contradiction since the $k$-length string $cx$ cannot appear twice in $w$.
Moreover, there are no right-maximal strings of length $k$ or greater; hence, there are no right-extensions of length greater than $k$. It follows that $\sre(w) =  |\fact_w(k)| = 2^k = \Omega(n)$.  \end{proof} 

Because $g = O(n/\log n)$ on de Bruijn sequences~\cite{NOP21} and by Lemma~\ref{lem:chi_de_bruijn}, we have: 

\begin{corollary}\label{cor:chi_uncomparable_g}There exists a string family with $\chi=\Omega(g\log n)$.\end{corollary}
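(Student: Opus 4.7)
The plan is to instantiate the corollary on the binary de Bruijn family already exploited in Lemma~\ref{lem:chi_de_bruijn}. Fix any $w \in \db(k)$, so that $n = |w| = 2^k + (k-1)$ and hence $\log n = \Theta(k)$. By Lemma~\ref{lem:chi_de_bruijn}, $\sre(w) = 2^k$, and appending the terminator can grow $\sre$ by at most a constant: Lemma~\ref{lem:sre_append_bounded} gives $\sre(w) \le \sre(w\dol) \le \sre(w)+2$. Since $\chi(w) = \sre(w\dol)$ by Definition~\ref{def:measure_chi}, this yields $\chi(w) \ge 2^k = \Omega(n)$ on the family.

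To close the argument, I would invoke the known upper bound $g = O(n/\log n)$ for de Bruijn sequences from \cite{NOP21}, which is the same source already cited in the paragraph preceding the corollary. This rearranges to $g \log n = O(n)$. Combining with the previous paragraph, $\chi(w) = \Omega(n) = \Omega(g \log n)$ on $\db(k)$ as $k \to \infty$, witnessing a family as required by the corollary.

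There is no genuine obstacle here: the two ingredients are already established. The only step worth writing explicitly is the passage from $\sre(w)$ to $\chi(w) = \sre(w\dol)$, which follows either from Lemma~\ref{lem:sre_append_bounded} directly or from Lemma~\ref{le:re_containment} via the inclusion $E_r(w) \subseteq E_r(w\dol)$. Everything else is assembling the chain $\chi = \Omega(n)$, $g = O(n/\log n)$, hence $\chi = \Omega(g \log n)$.
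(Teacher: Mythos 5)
Your proof is correct and follows the same route as the paper: combining Lemma~\ref{lem:chi_de_bruijn} ($\chi=\Omega(n)$ on binary de Bruijn sequences) with the bound $g=O(n/\log n)$ from \cite{NOP21}. The paper states this in one line; your only addition is spelling out the passage from $\sre(w)$ to $\chi(w)=\sre(w\dol)$ via Lemma~\ref{lem:sre_append_bounded}, which is a harmless (and valid) extra detail.
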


This result is particularly relevant because all the copy-paste based measures $\mu$, with the exception of $z_e$, are $O(g)$. Corollary \ref{cor:chi_uncomparable_g} then implies $\mu=o(\chi)$ on de Bruijn sequences for all these measures $\mu$.

While it has been said that $z_e = O(n/\log n)$ on binary sequences as well~\cite{KN13}, this referred to the version that adds to each phrase the next nonmatching character. Because $z_e$ is not an optimal parse, it is not obvious that this also holds for the version studied later in the literature, which does not add the next character. We then prove next that $z_e = o(\chi)$ holds on de Bruijn words.

\begin{lemma} \label{lem:ze}
There exists a string family with $\chi = \Omega\left(z_e \frac{\log n \log\log\log n}{(\log\log n)^2}\right)$.
\end{lemma}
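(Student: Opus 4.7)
The plan is to prove this on the family of binary de Bruijn sequences. By Lemma~\ref{lem:chi_de_bruijn}, any $w \in \db(k)$ satisfies $\chi(w) = \Theta(n)$ with $n = 2^k + k - 1$, so the task reduces to exhibiting such a sequence on which
\[
z_e = O\!\left(\frac{n (\log\log n)^2}{\log n \log\log\log n}\right).
\]
Since $n/\log n$ is of the same order as $\delta$ and $z$ on de Bruijn sequences, what we need is a polylogarithmic-factor overhead between $z$ and $z_e$, not a linear one.

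First, I would revisit the Kreft--Navarro construction \cite{KN13} that shows $z_e = O(n/\log n)$ for the old LZ-End variant (where each phrase carries an extra mismatching character). Their construction is a concrete de Bruijn sequence of order $k$ whose parse consists of phrases of length roughly $\Theta(k)$, each matching a previous phrase followed by one extra symbol. I would write down this construction explicitly, and verify that the phrase boundaries of the old-variant parse are well-spread: every long enough substring later in the text either terminates exactly at such a boundary, or terminates within $\ell$ symbols of one, for a parameter $\ell$ to be chosen.

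Next, I would convert the old-variant parse into a legal modern LZ-End parse on the same sequence. Each old phrase of length $m$ becomes one modern phrase of length $m-1$ (matching a previous phrase that ended at a boundary), plus a short residue to cover the ``extra character'' and any displacement caused by the shifted boundaries. Greedy LZ-End on this residue can be bounded by a combinatorial counting argument: since every length-$k$ substring of a de Bruijn sequence is unique, the only residues that cannot be absorbed into long matches are those of length $\Theta(\log\log n)$ or less, and the number of distinct short residues that can arise is controlled by $P_w(j)$ for small $j$. Summing over the $\Theta(n/\log n)$ old phrases and optimizing the parameter $\ell$ yields an overhead of $O((\log\log n)^2 / \log\log\log n)$ per phrase, giving the claimed bound on $z_e$.

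The main obstacle is precisely this conversion: in the modern variant, a phrase must exactly match a previous substring ending at a previously chosen phrase boundary, so one cannot simply ``drop the last character'' of an old phrase without risk of creating a phrase that ends at a non-boundary. Controlling the blow-up tightly (rather than by a crude $O(\log n)$ factor) requires showing that the set of boundaries produced by the greedy parse is dense enough in the right sense, and this combinatorial density estimate for de Bruijn sequences is where the precise factor $(\log\log n)^2/\log\log\log n$ enters. If this direct approach proves too delicate, an alternative would be to use a known bound relating $z_e$ to $z$ through an intermediate measure (such as $z_{\text{end}}$ or $\grl$) on de Bruijn sequences, but the direct analysis seems likelier to yield the stated quantitative bound.
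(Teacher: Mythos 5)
Your choice of family and overall architecture match the paper: de Bruijn sequences, $\chi=\Theta(n)$ from Lemma~\ref{lem:chi_de_bruijn}, and the whole burden shifted to showing $z_e = O\bigl(n(\log\log n)^2/(\log n\log\log\log n)\bigr)$. But the way you propose to establish that upper bound has a genuine gap. Your plan converts the old-variant parse of \cite{KN13} into a modern LZ-End parse and asserts that the residues cost $O((\log\log n)^2/\log\log\log n)$ per phrase after ``optimizing the parameter $\ell$'' via a ``combinatorial density estimate'' for the greedy phrase boundaries. That estimate is never stated, let alone proved, and you yourself flag it as the main obstacle. There is no derivation offered for why the overhead would come out to exactly $(\log\log n)^2/\log\log\log n$ rather than, say, $O(\log\log n)$ or $O(\log n)$; the factor appears to have been written down to match the target rather than obtained from the argument. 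A further complication you do not address is that greedy LZ-End phrases must end at boundaries of the \emph{greedy} parse itself, not at boundaries of the old-variant parse you start from, so the two parses can drift apart and your per-phrase accounting does not obviously compose.

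The paper avoids all of this by citing the general bound $z_e = O\bigl(z\,\log^2(n/z)/\log\log(n/z)\bigr)$ from \cite{GKM23}, which holds for every string. On de Bruijn sequences $z=\Theta(n/\log n)$, hence $n/z=\Theta(\log n)$, and substituting gives precisely $z_e = O\bigl(n(\log\log n)^2/(\log n\log\log\log n)\bigr)$; combining with $\chi=2^k=\Theta(n)$ finishes the proof in three lines. This is essentially the ``alternative'' you mention in passing at the end (a known relation between $z_e$ and $z$), and it is the route you should take: it turns your incomplete construction into a one-step substitution and explains where the otherwise mysterious factor $(\log\log n)^2/\log\log\log n$ comes from.
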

\begin{proof}
It always holds that $z_e = O\left(z \frac{\log^2(n/z)}{\log\log(n/z)}\right)$~\cite{GKM23}. In de Bruijn sequences it holds that $z=\Theta(n/\log n)$, so $n/z=\Theta(\log n)$. Therefore, $z_e=O\left(z \frac{(\log\log n)^2}{\log\log\log n}\right)$, and replacing $z=\Theta(n/\log n)$ we get $z_e = O\left(n\frac{(\log\log n)^2}{\log n\log\log\log n}\right)$. By Lemma~\ref{lem:chi_de_bruijn}, this yields $\chi=\Omega\left(z_e \frac{\log n \log\log\log n}{(\log\log n)^2}\right) = \omega(z_e)$ on de Bruijn sequences.
\end{proof}

\begin{corollary}\label{cor:chi_uncomparable_to_other_measures}
The measure $\chi$ is uncomparable to $\mu \in \{z,z_{no},z_e,z_{end},v,g,g_{rl},c\}$.
\end{corollary}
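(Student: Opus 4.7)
The plan is to prove incomparability in the standard two-sided manner: for each $\mu$ in the list, exhibit one string family where $\chi = o(\mu)$ and another where $\mu = o(\chi)$. Both separating families are already in play in this section, namely the de Bruijn sequences, on which $\chi = \Theta(n)$ by Lemma~\ref{lem:chi_de_bruijn}, and the Fibonacci words, on which $\chi = O(1)$ by Lemma~\ref{lem:fibo}.

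For the direction $\mu = o(\chi)$ I would use the de Bruijn family. As noted in the discussion just before Corollary~\ref{cor:chi_uncomparable_g}, every $\mu \in \{z,z_{no},z_{end},v,g,g_{rl},c\}$ satisfies $\mu = O(g)$ on all strings, and on de Bruijn sequences of order $k$ it is known that $g = O(n/\log n)$ \cite{NOP21}. Combined with Lemma~\ref{lem:chi_de_bruijn} this yields $\mu = O(n/\log n) = o(n) = o(\chi)$. The only measure escaping the $\mu = O(g)$ hypothesis, namely $z_e$, is settled precisely by Lemma~\ref{lem:ze}.

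For the converse direction $\chi = o(\mu)$ I would use the Fibonacci family. Since Lemma~\ref{lem:fibo} gives $\chi = O(1)$ on $F_k\dol$, it is enough to show that each $\mu$ in the list is $\omega(1)$ on Fibonacci words. For $\mu = v$ this is exactly Corollary~\ref{cor:v}. For the remaining measures I would invoke the standard information-theoretic lower bound that any copy-paste representation of size $s$ over a constant-size alphabet decodes to a string of length at most $2^{O(s)}$, so $s = \Omega(\log n)$; since $|F_k| \to \infty$, this yields $\mu = \omega(1)$ on the Fibonacci family and hence $\chi = o(\mu)$.

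The main technical point is justifying the $\Omega(\log n)$ lower bound uniformly across the most permissive copy-paste measures, $g_{rl}$ and $c$, where the derivation tree can be much more compact than for ordinary grammars or LZ parses. For these, the bound follows by a depth-counting argument on the derivation DAG (run-length and collage rules only multiply lengths by polynomial factors in $s$), and in any case it is already folklore in the repetitiveness literature. Everything else in the proof is a direct reassembly of Lemmas~\ref{lem:chi_de_bruijn}, \ref{lem:fibo}, and~\ref{lem:ze} with Corollaries~\ref{cor:v} and~\ref{cor:chi_uncomparable_g}.
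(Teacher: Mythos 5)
Your first direction ($\mu=o(\chi)$ on de Bruijn sequences) matches the paper's argument exactly, including the separate treatment of $z_e$ via Lemma~\ref{lem:ze}. The gap is in the second direction, specifically in how you lower-bound $g_{rl}$ and $c$ on Fibonacci words. The ``information-theoretic'' bound $\mu=\Omega(\log n)$ that you invoke is simply false for these two measures: a single run-length rule $A\to B^k$ (or a repetition rule in a collage system) multiplies the expanded length by $k$ at $O(1)$ cost, and $k$ is not bounded by any function of the representation size --- for instance, $g_{rl}(\a^n)=O(1)$ and likewise for collage systems. Your claim that ``run-length and collage rules only multiply lengths by polynomial factors in $s$'' is exactly where the argument breaks, and this is not folklore: whether $c=\Omega(\log n)$ holds on a given family is a combinatorial property of that family (essentially, that its members are not too rich in high powers), not a counting fact about all strings.

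The paper closes this step by citing a specific result, $c=\Omega(\log n)$ on Fibonacci words \cite[Thm.~32]{NOP21}, combined with $c=O(\mu)$ for every other measure in the list \cite[Thm.~30]{NOP21}, so a single lower bound on the smallest measure $c$ propagates to all of them ($v$ is handled by Corollary~\ref{cor:v}, as you also do). To repair your proof you must either cite that result or actually prove that Fibonacci words admit no constant-size collage system, which requires exploiting their structure rather than a generic depth-counting argument.
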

\begin{proof}
From Corollary~\ref{cor:chi_uncomparable_g} and Lemma~\ref{lem:ze}, and that  $z$, $z_{no}$, $z_{end}$, $v$, $g_{rl}$ and $c$ are always $O(g)$, it follows that there are string families where $\mu = o(\chi)$, for any $\mu \in \{z,z_{no},z_e,z_{end},v,g,g_{rl},c\}$. On the other hand, from Lemma~\ref{lem:fibo} and Corollary~\ref{cor:v}, and that $c = \Omega(\log n)$ on Fibonacci words~\cite[Thm.~32]{NOP21} and $c = O(\mu)$ for any $\mu \in \{z,z_{no},z_e,z_{end},g_{rl},g\}$~\cite[Thm.~30]{NOP21}, it follows that there are string families where $\chi = o(\mu)$, for any $\mu \in \{z,z_{no},z_e,z_{end},v,g,g_{rl},c\}$.
\end{proof}

\section{Online Computation of Smallest Suffixient Sets} \label{sec:ukkonen}

As an application of Lemmas~\ref{lem:sre_append_bounded} and \ref{lem:sre_prepend_bounded}, we show that Ukkonen's \cite{Ukk95} and Weiner's \cite{Wei73} linear-time online constructions of suffix trees can be easily modified to compute smallest suffixient sets within the same space and time complexity.

The {\em suffix tree} of a text $T$ is a tree of size $O(|T|)$ where edges are labeled by nonempty substrings of $T$ (the label $T[i\dd j]$ is indicated by the pair $(i,j)$). Every internal node has at least two children, and the labels of edges to any two children must differ in their first symbol. If node $x$ has a child node $y$ by an edge labeled $(i,j)$, we say that $y$ is a child {\em by label} $T[i]$. Each node $x$ is said to be the {\em locus} of the string obtained by concatenating the labels of the edges that lead from the root to $x$. The {\em string depth} of a node $x$ is the length of the string $x$ is the  locus of. The tree leaves are the loci of suffixes $T[i\dd]$, whereas internal nodes are loci of strings that occur at least twice in $T$. Edges to leaves have labels of the form $(i,\infty)$, which means the second component is always the last position processed of $T$. Suffix tree nodes also have so-called {\em suffix links}, which lead from a node that is the locus of a string $cu$, for $c \in \Sigma$, to the locus of string $u$ (which always exists: if $cu$ occurs more than once, so does $u$). Finally, we extend the concept of suffix tree nodes to {\em implicit nodes}, which are virtual nodes with one child, assumed to exist along edges: if $y$ is the child of $x$ by an edge labeled $(i,j)$ and $x$ is the locus of string $u$, then $(x,(i,p))$, for $i \le p < j$, is an implicit node that is the locus of $u \cdot T[i\dd p]$. Knowing the loci of which string they are, suffix links are also defined from implicit nodes (those can lead to explicit or to implicit nodes).

\subsection{Ukkonen's left-to-right construction}

Ukkonen's algorithm \cite{Ukk95} builds the suffix tree of $T$ such that, after having processed any prefix $w$ of $T$, it has built the suffix tree of $w$. To prevent special cases, it assumes that the root is in fact a child of a special node $\bot$, with edges to the root labeled $c$ for every $c \in \Sigma$; the suffix link of the root points to $\bot$. We do not aim at a full explanation of the algorithm, but just highlight some of its relevant properties. Let the prefix $w$ of $T$ be followed by $c \in \Sigma$. The algorithm maintains pointers to two (possibly implicit) nodes of the suffix tree of $w$:
\begin{description}
\item[$s$:] called the {\em active point} of $w$, is the locus of the longest suffix $u$ of $w$ that occurs at least twice in $w$;
\item[$s'$:] called the {\em end point} of $w$, is the locus of the longest suffix $v$ of $w$ such that $vc$ occurs in $w$.
\end{description}

Note that this implies that the locus of $vc$ is the active point of $wc$ (i.e., the new $s$ after processing $c$), and that, because $v$ must be a suffix of $u$, there is a {\em chain} of consecutive suffix links from $s$ to $s'$. The algorithm updates the suffix tree nodes in that chain, from $s$ to (but not including) $s'$, by adding a new leaf child labeled $c$ to those nodes. Although updating the suffix tree of $w$ to obtain that of $wc$ may take non-constant time, the time does amortize to constant \cite{Ukk95}.

We will enhance the suffix tree nodes by adding a {\em mark} to the nodes that are loci of the supermaximal right-extensions of a smallest suffixient set of $w$. The length of the extension is the string depth of the marked node, and any suffix descending from the node serves as a starting position of such extension. This is the way in which we maintain a smallest suffixient set for the current prefix $w$ of $T$. For example, we can easily maintain the marked nodes in a list in order to collect the set for any prefix in optimal time. To compute $\chi(T)$, we can run the algorithm on $T\dol$ and then return the length of the list. 

Note that the loci of right-extensions are either explicit nodes, or implicit nodes of the form $(x,(i,i))$, because they extend by one symbol a string that occurs more than once. We can then always associate the mark to the corresponding explicit child $y$ of $x$, so as to consult and update it in constant time.

The first letter $c$ of $T$ is processed by adding a child leaf of the root
by label $c$ and setting $s$ to the root. From there on,
considering the cases of Lemma~\ref{lem:sre_append_bounded}, we proceed as follows:
\begin{description}
\item[$s$ has a child by label $c$:] This means that $s=s'$ and Ukkonen's algorithm just descends by $c$ to find the new $s$. We do not need any further action because we are in case 1 of the lemma.
\item[$s$ has two or more children, none by label $c$:] This is case 2 of the lemma. Ukkonen's algorithm will create a new leaf child by label $c$ of $s$, and of all the nodes in the suffix-link chain until it finds $s'$ (i.e., the first node in the chain having a child by label $c$). We then (1) mark the (just created) child of $s$ by label $c$, and (2) unmark, if it is marked, the (already existing) child of $s'$ by label $c$. This is because $uc$ is a new supermaximal right extension of the right-maximal string $u$ and, in case (2), $vc$ ceases to be a supermaximal right extension because it is a suffix of the new one we are adding, $uc$.
\item[$s$ has just one child, by label $a \neq c$:] This is case 3 of the lemma. Ukkonen's algorithm proceeds exactly as in the previous case, finding the first $s'$ with a child by label $c$ in the suffix-link chain. We (1) mark the two children of $s$ (by labels $a$ and $c$), (2) unmark, if it is marked, the child of $s'$ by label $c$, and (3) unmark, if it is marked, the child of $s''$ by label $a$, where $s''$ is the first node in the suffix-link chain that has another child labeled $b \neq a$ (note that $s''$ must be on the chain from $s$ to $s'$, because one possible choice is $b=c$ and $s''=s'$). This is because $ua$ and $uc$ are new supermaximal right-extensions of $u$, and in case (2), $vc$ is not anymore supermaximal, as before. In case (3), similarly, if $s''$ is the locus of $z$, then $za$ is not anymore supermaximal. Note that the child by label $a$ of $s''$ is the only locus of some $za$ suffix of $ua$ that could possibly be marked. 
\end{description}

Ukkonen's algorithm is claimed to be $O(n)$ time, but this assumes that the alphabet is constant. If it is not, we may need more time to find the child by label $c$ among its children, or determine it does not exist. If the alphabet is an integer range and of size polynomial in $n$, we can still retain $O(n)$ time in the transdichotomous RAM model of computation with computer word size $\omega = \Omega(\log n)$ using fusion trees, as the children operations can then be handled in time $O(\log_\omega|\Sigma|)=O(1)$ \cite{PT14}. Otherwise, an extra factor of $O(\log|\Sigma|)$ appears.

\begin{theorem}
There exists an algorithm to compute smallest suffixient sets that processes a text $T$ left to right such that, for every $n$, after having processed the prefix $T[1\dd n]$ it has determined a smallest suffixient set of $T[1\dd n]$ using $O(n)$ space and $O(n)$ worst-case time. This can be used to compute $\chi(T)$ within $O(|T|)$ space and time.
\end{theorem}

\subsection{Weiner's right-to-left construction}

Weiner's algorithm \cite{Wei73} process the text right-to-left, prepending a symbol at each step. Analogously to Ukkonen's algorithm, after having processed a suffix of $T$, it has built the suffix tree of that suffix. The algorithm is possibly less convenient than Ukkonen's, as it is more natural to process the text left to right, but its adaptation to compute $\sre$ using Lemma~\ref{lem:sre_prepend_bounded} needs diving less into its details. Further, it has the property that we immediately know $\chi$ for each suffix of $T$ as soon as we process it.

The algorithm is actually defined for a $\$$-terminated text, $T\$$.
Assume we have already processed the suffix $w\$$ of $T\$$, and now prepend $c$
to obtain the suffix tree of the suffix $cw\$$. The algorithm maintains a pointer to the deepest explicit node $u$ in the path to the leaf that represents $w\$$. To go from $w\$$ to $cw\$$, the algorithm finds, or creates, the deepest explicit node $v$ in the path to $cw\$$, and adds as a child of $v$ a new leaf that represents $cw\$$. 

While the algorithm manages to find $v$ in constant amortized time from $u$ using (and maintaining) the so-called ``Weiner links'' (which are the inverse of the suffix links), let us reason as if we found $v$ by descending from the root with the symbols of $cw\$$. We would descend by the path until we find an explicit or implicit node $v$ that is the locus of a prefix $cx$ of $cw\$$ from where no child descends by label $a$, where $cxa$ also prefixes $cw\$$. Those are the strings $cx$ and $cxa$ alluded to in Lemma~\ref{lem:sre_prepend_bounded}. The algorithm then:
\begin{itemize}
    \item If $v$ is implicit, converts it into an explicit node with (temporarily) a single child, which descends by label, say, $b$.
    \item Creates a new child of $v$ by label $a \neq b$, which is a leaf representing $cw\$$. 
\end{itemize}
By Lemma~\ref{lem:sre_prepend_bounded}, $cx$ is right-maximal in $cw\$$ (this is also clear because the explicit node $v$ is its locus). To obtain $\mathcal{S}_r(cw\$)$, we need to add to $\mathcal{S}_r(w\$)$ at most two right-extensions: $cxa$, which is represented by the child of $v$ by label $a$ and, in case $v$ was originally implicit, $cxb$, the other child of $v$. We maintain $\mathcal{S}_r$ by marking the edges leaving from the corresponding right-maximal strings. Consequently, we now mark the two edges that leave from $v$. On the other hand, if $xa$ and/or $xb$ were in $\mathcal{S}_r(w\$)$, they must be removed to form $\mathcal{S}_r(cw\$)$ as they are now suffixes of $cxa$ and $cxb$. We thus unmark, if marked, the two children of $u$ that descend by $a$ and $b$ (both of which exist).

A suffixient set of string positions can be inferred from the edges we have marked at each point.

As Ukkonen's, Weiner's algorithm is linear-time under the same assumptions of a transdichotomous RAM model with computer word size $\Omega(\log n)$; otherwise a factor of $O(\log |\Sigma|)$ multiplies the time complexity.

\begin{theorem}
There exists an algorithm to compute smallest suffixient sets that processes a $T\$$ right to left such that, for every $n$, after having processed the suffix $w[1\dd n]\$$ of $T\$$ it has determined a smallest suffixient set of $w\$$, and its size $\chi(w)$, using $O(n)$ space and $O(n)$ worst-case time.
\end{theorem}

\section{Conclusions and Open Questions}

We have contributed to the understanding of $\chi$ as a new 
measure of repetitiveness, better finding its place among more studied ones. Figure~\ref{fig:measures} shows the (now) known relations around $\chi$ (cf.~\cite{NavSurvey}). We also proved various additive and multiplicative lower and upper bounds on the sensitivity of $\chi$ to various string operations: appends/prepends, general edits, rotations, and reversals. As a direct consequence of our results, we derive new simple linear-time online algorithms to compute smallest suffixient sets, and thus $\chi$, as a modification of Ukkonen's and Weiner's suffix tree constructions \cite{Ukk95,Wei73}.

There are still many interesting open questions about $\chi$. As a follow-up to our conference paper, it has been recently shown that our bound $\chi \le 2r$ is tight \cite{tight2r}. It was also proved recently that $\chi$ is reachable \cite{SB26}, a major breakthrough. Nevertheless, it is still not known if $\chi = \Omega(b)$.

One consequence of Corollary~\ref{cor:chi_uncomparable_g} is that $\chi \not \in O(g\log^k(n/g))$ for any $k \ge 0$. 
It could be the case, though, that $\chi \in O(\delta\log n)$, because the separation of $\chi$ and $\delta$ on de Bruijn sequences is a $\Theta(\log n)$ factor.

Regarding edit operations, we proved a lower bound of $\Omega(\sqrt{n})$ to their additive sensitivity. A trivial upper bound is $O(n)$ because $\chi \le n$, but we conjecture that the sensitivity is indeed $\Theta(\sqrt{n})$. Proving or disproving this conjecture is an open problem (per our results, it suffices to prove the bound $O(\sqrt{n})$ for rotations, deletions, or substitutions). With respect to multiplicative sensitivity, experiments suggest that $\sre(w')/\sre(w)$ is $O(1)$ for all the string operations we considered.
We proved such a multiplicative constant for reversals. Proving the same for insertions would imply the existence of a constant for rotation and vice versa. It is also open whether $r = O(\chi\log n)$. If this were true  ---and provided that $\chi$ has $O(1)$ multiplicative sensitivity to string operations--- it would imply that $r$ has $O(\log n)$ multiplicative sensitivity to these operations, making the already known lower bounds on multiplicative sensitivity~\cite{AFI23,GILPST21,GILRSU25} tight. If the conjecture were false, then $\chi$ could be considerably smaller than $r$ in some string families. In the same line, observe that Lemma \ref{le:reverse_multiplicative}  implies that, either $r(w)/r(w^R)=\omega(\log n)$ and $r$ can be $\omega(\chi\log n)$, or $r(w)/r(w^R) = O(\log n)$ and this known bound is tight.

\subsection*{Acknowledgements}

We thank Davide Cenzato, Francisco Olivares, and Nicola Prezza, for useful discussions on suffixient sets, and for their code to compute smallest suffixient sets \url{https://github.com/regindex/suffixient}~\cite{cop:spire2024}, which was helpful to propose and discard hypotheses on the behavior of $\chi$, and to verify the implementation of our Ukkonen's extension that computes them online. We also thank Gregory Kucherov and Dominik K\"oppl for pointing out a bug in an early version of our Weiner's extension.

\subsection*{Disclosure of interests}The authors have no competing interests to declare that are relevant to the content of this article.

\subsection*{Funding}

Open access funding provided by Università degli Studi di Palermo within the CRUI-CARE Agreement.

H.F. was supported by JST BOOST, Japan Grant Number JPMJBS2406.

G.N. and C.U. were partially funded by Basal Funds FB0001 and \sloppy AFB240001, ANID, Chile; and Fondecyt grants 1230755 and 1260080, ANID, Chile.

G.R. was partially supported by the project \vir{ACoMPA – \sloppy Algorithmic and Combinatorial Methods for Pangenome Analysis} (CUP B73C24001050001) funded by
NextGeneration EU programme PNRR MUR M4 C2 Inv. 1.5 - Project ECS00000017 Tuscany Health Ecosystem (Spoke 6), CUP Master B63C22000680007, and by the INdAM - GNCS Project \sloppy CUP$\_$E53C25002010001. 

C.U. was supported by the Polish National Science Center, grant no. 2022/46/E/ST6/00463; \sloppy ANID-Subdirección de Capital Humano/Doctorado Nacional/2021-21210580, ANID, Chile; and NIC Chile Doctoral Scholarship, NIC, Chile.

\bibliographystyle{elsarticle-num} 
\bibliography{bibliography}

@inproceedings{cop:spire2024,
title={On Computing the Smallest Suffixient Set}, 
author={Davide Cenzato and Francisco Olivares and Nicola Prezza},
year={2024},
booktitle={Proc. 31st International Symposium on String Processing and Information Retrieval (SPIRE 2024)},
pages={73--87},
series       = {Lecture Notes in Computer Science},
volume       = {14899},
publisher    = {Springer},
doi={10.1007/978-3-031-72200-4_6}
}

@article{suffixientarrays,
title={Suffixient Arrays: a New Efficient Suffix Array Compression Technique}, 
author={Davide Cenzato and Lore Depuydt and Travis Gagie and Sung-Hwan Kim and Giovanni Manzini and Francisco Olivares and Nicola Prezza},
year={2025},
JOURNAL = "CoRR",
VOLUME = "2407.18753",
doi={10.48550/arXiv.2407.18753}
}

@inproceedings { tight2r, 
title = "On the near-tightness of $\chi \le 2r$: A general $\sigma$-ary construction and a binary case via {LFSRs}", 
author = "Vinicius Tikara Venturi Date and Leandro Miranda Zatesko",
booktitle = "Proc. 17th Latin American Theoretical Informatics (LATIN)",
year = 2026,
note = "To appear"
}

@inproceedings { PT14, 
   author = "M. Patrascu and M. Thorup",
   year = 2014,
   title = "Dynamic Integer Sets with Optimal Rank, Select, and Predecessor Search",
   booktitle = "Proc. 55th IEEE Annual Symposium on Foundations of Computer Science (FOCS)",
   pages = "166--175"
   }

@INPROCEEDINGS {
        Wei73,
        AUTHOR = "P. Weiner",
        TITLE = "Linear pattern matching algorithm",
        BOOKTITLE = "Proc. 14th Annual IEEE Symposium on Switching and
                     Automata Theory",
        YEAR = 1973, 
        PAGES = "1--11",
        }

@article{SB26,
      title={String Representation in Suffixient Set Size Space}, 
      author={H. Shibata and H. Bannai},
      year={2026},
      volume={2604.04377},
      journal={CoRR},
      !primaryClass={cs.DS},
      url={https://arxiv.org/abs/2604.04377}, 
}

@ARTICLE {
        Ukk95, 
        AUTHOR = "E. Ukkonen",
        TITLE = "On-line construction of suffix trees",
        JOURNAL = "Algorithmica",
        VOLUME = 14,
        NUMBER = 3,  
        YEAR = 1995,
        PAGES = "249--260",
        }

@article{ DGLMP23,
  author       = {Lore Depuydt and
                  Travis Gagie and
                  Ben Langmead and
                  Giovanni Manzini and
                  Nicola Prezza},
  title        = {Suffixient Sets},
  journal      = {CoRR},
  volume       = {2312.01359},
  year         = {2023},
  !url          = {https://doi.org/10.48550/arXiv.2312.01359},
  doi          = {10.48550/arXiv.2312.01359},
  !eprinttype    = {arXiv},
  !eprint       = {2312.01359},
  !timestamp    = {Mon, 26 Aug 2024 16:11:59 +0200},
  !biburl       = {https://dblp.org/rec/journals/corr/abs-2312-01359.bib},
  !bibsource    = {dblp computer science bibliography, https://dblp.org}
}

@article{NOP21,
author       = {Gonzalo Navarro and Carlos Ochoa and Nicola Prezza},
title        = {On the Approximation Ratio of Ordered Parsings},
journal      = {{IEEE} Transactions on Information Theory},
volume       = {67},
number       = {2},
pages        = {1008--1026},
year         = {2021},
doi={10.1109/TIT.2020.3042746}
}

@ARTICLE {
        Navacmcs20.3,
        TITLE = "Indexing Highly Repetitive String Collections, 
		 Part {I}: Repetitiveness Measures",
        AUTHOR = "Gonzalo Navarro",
        JOURNAL = {ACM Computing Surveys},
        YEAR = 2021,
        VOLUME = 54,
        NUMBER = 2,
        PAGES = "article 29",
        doi = {10.1145/3434399}
        }

@article {
        NavSurvey,
        TITLE = "Indexing Highly Repetitive String Collections",
        AUTHOR = "Gonzalo Navarro",
        JOURNAL = "CoRR",
        YEAR = 2022,
        VOLUME = "2004.02781",
        doi={10.48550/arXiv.2004.02781}
}

@article{deLuca81,
  author       = {Aldo de Luca},
  title        = {A Combinatorial Property of the {F}ibonacci Words},
  journal      = {Information Processing Letters},
  volume       = {12},
  number       = {4},
  pages        = {193--195},
  year         = {1981},
  doi = {10.1016/0020-0190(81)90099-5}
}

@inproceedings{GKM23,
  author       = {Pawel Gawrychowski and
                  Maria Kosche and
                  Florin Manea},
  !editor       = {Franco Maria Nardini and
                  Nadia Pisanti and
                  Rossano Venturini},
  title        = {On the Number of Factors in the {LZ}-End Factorization},
  booktitle    = {Proc. 30th International Symposium on String Processing and Information Retrieval (SPIRE 2023)},
  series       = {Lecture Notes in Computer Science},
  volume       = {14240},
  pages        = {253--259},
  publisher    = {Springer},
  year         = {2023},
  doi = {10.1007/978-3-031-43980-3_20}
}

@article {Dro95,
   author = "Xavier Droubay",
   title = "Palindromes in the {F}ibonacci word",
   journal = "Information Processing Letters", 
   volume = 55,
   number = 4,
   pages = "217--221",
   year = 1995,
   doi = {10.1016/0020-0190(95)00080-V},
}

@article{GILRSU25,
  author       = {Sara Giuliani and
                  Shunsuke Inenaga and
                  Zsuzsanna Lipt{\'{a}}k and
                  Giuseppe Romana and
                  Marinella Sciortino and
                  Cristian Urbina},
  title        = {Bit Catastrophes for the {B}urrows-{W}heeler Transform},
  journal      = {Theory of Computing Systems},
  volume       = {69},
  number       = {2},
  pages        = {19},
  year         = {2025},
  doi={10.1007/s00224-024-10212-9}
}

@article{AFI23,
title = {Sensitivity of string compressors and repetitiveness measures},
journal = {Information and Computation},
volume = {291},
pages = {104999},
year = {2023},
author = {Tooru Akagi and Mitsuru Funakoshi and Shunsuke Inenaga},
doi = {10.1016/j.ic.2022.104999}
}

@inproceedings{FRSU23,
  author       = {Gabriele Fici and
                  Giuseppe Romana and
                  Marinella Sciortino and
                  Cristian Urbina},
  title        = {On the Impact of Morphisms on {BWT}-Runs},
  booktitle    = {Proc. 34th Annual Symposium on Combinatorial Pattern Matching ({CPM} 2023)},
  series       = {Leibniz International Proceedings in Informatics},
  volume       = {259},
  pages        = {10:1--10:18},
  publisher    = {Schloss Dagstuhl - Leibniz-Zentrum f{\"{u}}r Informatik},
  year         = {2023},
  doi={10.4230/LIPIcs.CPM.2023.10}
}

@inproceedings{FRSU25,
  author =	{Fici, Gabriele and Romana, Giuseppe and Sciortino, Marinella and Urbina, Cristian},
  title =	{{Morphisms and BWT-Run Sensitivity}},
  booktitle =	{50th International Symposium on Mathematical Foundations of Computer Science (MFCS 2025)},
  pages =	{49:1--49:18},
  series =	{Leibniz International Proceedings in Informatics (LIPIcs)},
  year =	{2025},
  volume =	{345},
  publisher =	{Schloss Dagstuhl -- Leibniz-Zentrum f{\"u}r Informatik},
  doi =		{10.4230/LIPIcs.MFCS.2025.49}
}

@ARTICLE {
        Navacmcs20.2,
        TITLE = "Indexing Highly Repetitive String Collections, 
		 Part {II}: Compressed Indexes",
        AUTHOR = "Gonzalo Navarro",
        JOURNAL = {ACM Computing Surveys},
        YEAR = 2021,
	VOLUME = 54,
	NUMBER = 2,
	PAGES = "article 26",
        doi = {10.1145/3432999}
        }

@inproceedings { KP18,
        author = "Dominik Kempa and Nicola Prezza",
        title = "At the Roots of Dictionary Compression: String Attractors",
        booktitle = "Proc. 50th Annual ACM Symposium on the Theory of
                        Computing (STOC 2018)",
        year = 2018,
        pages = "827--840",
        publisher    = {{ACM}},
        doi = {10.1145/3188745.3188814}
        }

@TECHREPORT { 
        BW94,
        AUTHOR = "Michael Burrows and David Wheeler",
        TITLE = "A block sorting lossless data compression algorithm",
        INSTITUTION = "Digital Equipment Corporation",
        NUMBER = 124,
        YEAR = 1994
        }

@ARTICLE{LZ76,
        author = {Abraham Lempel and Jakov Ziv},
        journal = {IEEE Transactions on Information Theory},
        number = {1},
        pages = {75--81},
        title = {On the Complexity of Finite Sequences},
        volume = {22},
        year = {1976},
        doi={10.1109/TIT.1976.1055501}
}

@book{LothaireAlg,
Address = {New York, NY, USA},
Author = {M.~Lothaire},
Publisher = {Cambridge University Press},
Series = {Encyclopedia of Mathematics and its Applications},
Title = {Algebraic Combinatorics on Words},
Year = 2002,
doi={10.1017/CBO9781107326019}
}

@ARTICLE    
        { KN13,
          AUTHOR = "Sebastian Kreft and Gonzalo Navarro",
          TITLE = "On Compressing and Indexing Repetitive Sequences",
          JOURNAL = "Theoretical Computer Science",
          YEAR = 2013,
          VOLUME = 483,
          PAGES = "115--133",
          doi = {10.1016/j.tcs.2012.02.006}
        }

@inproceedings{GILPST21,
  author       = {Sara Giuliani and
                  Shunsuke Inenaga and
                  Zsuzsanna Lipt{\'{a}}k and
                  Nicola Prezza and
                  Marinella Sciortino and
                  Anna Toffanello},
  title        = {Novel Results on the Number of Runs of the {Burrows-Wheeler-Transform}},
  booktitle    = {Proc. 47th International
                  Conference on Current Trends in Theory and Practice of Computer Science
                  ({SOFSEM} 2021)},
  series       = {Lecture Notes in Computer Science},
  volume       = {12607},
  pages        = {249--262},
  publisher    = {Springer},
  year         = {2021},
  doi ={10.1007/978-3-030-67731-2_18}
}

@article{ManberM93,
  author       = {Udi Manber and
                  Eugene W. Myers},
  title        = {Suffix Arrays: {A} New Method for On-Line String Searches},
  journal      = {{SIAM} Journal on Computing},
  volume       = {22},
  number       = {5},
  pages        = {935--948},
  year         = {1993},
  !url          = {https://doi.org/10.1137/0222058},
  doi          = {10.1137/0222058},
  timestamp    = {Wed, 14 Nov 2018 10:45:07 +0100},
  biburl       = {https://dblp.org/rec/journals/siamcomp/ManberM93.bib},
  bibsource    = {dblp computer science bibliography, https://dblp.org}
}

@article{episturmian_survey, 
title={Episturmian words: a survey}, 
volume={43},
number={3}, 
journal={RAIRO - Theoretical Informatics and Applications}, 
author={Glen, Amy and Justin, Jacques}, 
year={2009}, 
pages={403–442},
doi={10.1051/ita/2009003}
}

@article{epistand_episturm,
title = {Episturmian words and some constructions of de {L}uca and {R}auzy},
journal = {Theoretical Computer Science},
volume = {255},
number = {1},
pages = {539-553},
year = {2001},
author = {Xavier Droubay and Jacques Justin and Giuseppe Pirillo},
doi = {10.1016/S0304-3975(99)00320-5}
}

@article{r_delta_bound,
author = {Kempa, Dominik and Kociumaka, Tomasz},
title = {Resolution of the {Burrows-Wheeler} transform conjecture},
year = {2022},
issue_date = {June 2022},
publisher = {Association for Computing Machinery},
address = {New York, NY, USA},
volume = {65},
number = {6},
issn = {0001-0782},
journal = {Communications of the ACM},
pages = {91–98},
numpages = {8},
doi = {10.1145/3531445}
}

@article{de_bruijn,
title = "A combinatorial problem",
author = "{Bruijn, de}, N.G.",
year = "1946",
volume = "49",
pages = "758--764",
journal = "Proceedings of the Section of Sciences of the Koninklijke Nederlandse Akademie van Wetenschappen te Amsterdam",
number = "7",
}

@article{attractors_combinatorics,
title = {A combinatorial view on string attractors},
journal = {Theoretical Computer Science},
volume = {850},
pages = {236-248},
year = {2021},
author = {Sabrina Mantaci and Antonio Restivo and Giuseppe Romana and Giovanna Rosone and Marinella Sciortino},
doi = {10.1016/j.tcs.2020.11.006},
}

@article{NU25,
  author       = {Gonzalo Navarro and
                  Cristian Urbina},
  title        = {Repetitiveness measures based on string morphisms},
  journal      = {Theoretical Computer Science},
  volume       = {1043},
  pages        = {115259},
  year         = {2025},
  doi = {10.1016/j.tcs.2025.115259},
}

@Article{NOU25,
author="Navarro, Gonzalo
and Olivares, Francisco
and Urbina, Cristian",
title="Generalized straight-line programs",
journal="Acta Informatica",
year="2025",
volume="62",
number="1",
pages="14",
doi={10.1007/s00236-025-00481-3}
}

@InProceedings{NRU_SPIRE_2025,
author="Navarro, Gonzalo
and Romana, Giuseppe
and Urbina, Cristian",
title="Smallest Suffixient Sets as a Repetitiveness Measure",
booktitle="Proc. 32nd International Symposium on String Processing and Information Retrieval (SPIRE 2025)",
year="2025",
publisher="Springer",
series = {Lecture Notes in Computer Science},
pages="217--232",
volume={16073},
doi={https://doi.org/10.1007/978-3-032-05228-5_18}
}

@inproceedings{chi2r,
    author = {Vinicius Tikara Venturi Date and Leandro Miranda Zatesko},
    title = {On the near-tightness of $\chi \leq 2r$: a general $\sigma$-ary construction and a binary case via {LFSRs}},
    booktitle = {17th Latin American Theoretical Informatics Symposium(LATIN)} ,
    year = 2026, 
    note = {To appear}
}

\end{document}